\theoremstyle{plain}
\newtheorem{Theorem}{Theorem}
\newtheorem{Proposition}[Theorem]{Proposition}
\newtheorem{Lemma}[Theorem]{Lemma}
\theoremstyle{definition}
\newtheorem{Definition}[Theorem]{Definition}
\theoremstyle{remark}
\newcommand{\ie}{\textit{i.e.}, }
\newcommand{\eg}{\textit{e.g.}, }
\newcommand{\R}{\mathbb{R}}
\newcommand{\A}{\mathcal{A}}
\newcommand{\Z}{\mathbb{Z}}
\newcommand{\id}{\operatorname{id}}
\newcommand{\Aut}{\operatorname{Aut}}
\newcommand{\PsAut}{\operatorname{PseudAut}}
\newcommand{\clos}[1]{\overline{#1}}
\newcommand{\Set}[2]{\left\{ #1 \; \mid \; #2 \right\}}
\newcommand{\NP}{\operatorname{NP}}
\newcommand{\N}{\mathbb{N}}
\newcommand{\NN}{\N^\N}
\newcommand{\Si}{\mathbb{S}}
\renewcommand{\O}{\mathcal{O}}
\newcommand{\K}{\mathcal{K}}
\newcommand{\V}{\mathcal{V}}
\title{On the computability of the Fr\'echet distance of surfaces
        in the bit-model of real computation}
\author{Eike Neumann \\Aston University, Birmingham, UK \\ \texttt{neumaef1@aston.ac.uk}}
\date{}
\begin{document}
\maketitle

\abstract
{
	We show that the Fr\'echet distance of two-dimensional parametrised surfaces in a metric space is computable in the bit-model of real computation.
	An analogous result in the real RAM model for piecewise-linear surfaces has recently been obtained by Nayyeri and Xu (2016).
}

\section{Introduction}

	In 1906, Maurice Fr\'echet introduced a natural pseudometric for parametrised curves \cite{FrechetCurve},
	which he generalised in 1924 to parametrised surfaces \cite{FrechetSurface}.
	If $A \colon [0,1]^2 \to X$ and $B \colon [0,1]^2 \to X$ are parametrised surfaces in some metric space,
	then their Fr\'echet distance is given by:

	\begin{equation}\tag{*}\label{Equation: Frechet distance of surfaces}
	\inf_{\varphi, \psi \in \Aut'([0,1]^2)}
		\max_{x \in [0,1]^2}
			d\left( A\left(\varphi(x)\right), B\left(\psi(x)\right) \right),
	\end{equation}

	\noindent
	where $\Aut'([0,1]^2)$ denotes the set of orientation-preserving homeomorphisms of $[0,1]^2$.

	The problem of computing the Fr\'echet distance of curves and surfaces has received considerable attention in Computational Geometry
	(see \cite{AltSurvey} and references therein).
	Alt and Godau showed in 1995 \cite{AltGodau} that the Fr\'echet distance between polygonal curves is polytime computable
	in the usual computational model of computational geometry.
	Later, Godau showed in his PhD-thesis \cite{GodauPhD} that computing the Fr\'echet distance between triangulated surfaces is $\NP$-hard.
	Alt and Buchin \cite{AltBuchin} proved in 2010 that the Fr\'echet distance between triangulated surfaces is upper semicomputable,
	\ie there exists an algorithm which takes an input two triangulated surfaces and returns as output a decreasing sequence of
	rational upper bounds to the Fr\'echet distance, which converges to the true distance.
	Note that no assumption is made on the rate of convergence of this sequence, so that in general one has no information
	on the quality of any such upper bound.
	It was also shown in \cite{AltBuchin} that the so-called weak Fr\'echet distance of surfaces,
	which is obtained by letting $\varphi$ and $\psi$ in \eqref{Equation: Frechet distance of surfaces}
	range over all surjective (and not necessarily injective) reparametrisations, is polytime computable.
	The question remained open whether the Fr\'echet distance of triangulated parametrised surfaces is computable,
	\ie whether there exists an algorithm which takes as input two triangulated surfaces and a number $\varepsilon > 0$
	and produces a rational approximation to the Fr\'echet distance to error $\varepsilon$.
  Recently Nayyeri and Xu \cite{NayyeriXu1} have shown that the Fr\'echet distance of piecewise-linear surfaces
  of genus 0 that are ``locally isometrically''
  immersed in $\mathbb{R}^3$ is computable in the above sense in the
  real RAM model.
  In a follow-up paper \cite{NayyeriXu2}
  they have even shown that the problem of deciding whether
  the Fr\'echet distance between two such surfaces is
  smaller than a given $\delta > 0$ is computable in
  PSPACE.

  In this paper we study the computability of the Fr\'echet distance in the more realistic bit-model of real computation (see \eg \cite{PourElRichards, Weih}) in which real numbers and similar infinite objects are encoded as
  streams of bits rather than being viewed as atomic entities.
  This model takes into account issues such as numerical stability which often
  constitute substantial obstructions to the practical implementability of
  real RAM algorithms.
  We obtain the following result:

	\begin{Theorem}\label{Theorem: main theorem}
	Let $X$ be a computable metric space.
	There exists an algorithm in the bit-model which takes as input
  two parametrised surfaces
	$A \colon [0,1]^2 \to X$,
	$B \colon [0,1]^2 \to X$
	in $X$, and returns as output their Fr\'echet distance.
	\end{Theorem}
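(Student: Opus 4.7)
The substantive content of the theorem is lower semicomputability of the functional $F(A,B)$, since upper semicomputability follows from routine density considerations. As a preliminary simplification, using that $\Aut'([0,1]^2)$ is a group under composition I rewrite
\[
F(A,B) \;=\; \inf_{\gamma \in \Aut'([0,1]^2)} \max_{y \in [0,1]^2} d\bigl(A(y), B(\gamma(y))\bigr),
\]
by the substitution $\gamma = \psi \circ \varphi^{-1}$. For upper semicomputability I enumerate the piecewise affine orientation-preserving homeomorphisms of $[0,1]^2$ with rational vertices, which are dense in $\Aut'([0,1]^2)$ in the sup-norm; for each such $\gamma$ I compute $\max_y d(A(y), B(\gamma(y)))$ to arbitrary precision via a grid whose fineness is controlled by a computable joint modulus of continuity for $A$ and $B$, and take the infimum of the resulting sequence. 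Continuity of $\gamma \mapsto \max_y d(A(y), B(\gamma(y)))$ in the sup-norm forces this infimum to agree with $F(A,B)$, yielding an upper-semicomputable approximation.

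The lower semicomputability is considerably more delicate, because $\Aut'([0,1]^2)$ is not compact in the sup-norm, so the infimum need not be attained and naive enumeration cannot be turned around. The plan is to compactify by passing from homeomorphisms to their graphs. Regard each $\gamma \in \Aut'([0,1]^2)$ as the compact set $\Gamma_\gamma = \{(y,\gamma(y)) : y \in [0,1]^2\} \subset [0,1]^4$, and let $\PsAut$ denote the closure of $\{\Gamma_\gamma : \gamma \in \Aut'([0,1]^2)\}$ in the hyperspace $\K([0,1]^4)$ equipped with the Hausdorff metric. By Blaschke's selection theorem $\K([0,1]^4)$ is compact, and $\PsAut$ is a closed subset, hence itself compact. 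The functional
\[
\Gamma \;\longmapsto\; \max_{(x,y)\in \Gamma} d(A(x), B(y))
\]
is continuous in the Hausdorff metric and computable from $A$ and $B$. A density argument then identifies $F(A,B)$ with the minimum of this functional over $\PsAut$, and by compactness the minimum is attained and, provided $\PsAut$ is presented as an \emph{effectively} compact subset of $\K([0,1]^4)$, it is a computable real.

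The hard part is therefore to produce an intrinsic, computably checkable description of which $\Gamma \in \K([0,1]^4)$ belong to $\PsAut$. Concretely, I expect to characterise $\PsAut$ by a combination of the two coordinate projections $\Gamma \to [0,1]^2$ being surjective together with a monotonicity or cell-likeness condition on their fibres, modelled on the classical decomposition-theoretic characterisations of limits of homeomorphisms of disks (the two-dimensional analogue of the ``non-crossing'' condition that captures the closure of monotone self-surjections of $[0,1]$ in the Fr\'echet theory for curves). Each of these is a Hausdorff-closed condition and I would verify that it is semidecidable on $\K([0,1]^4)$; the corresponding density statement---that every pseudo-automorphism is a Hausdorff limit of graphs of orientation-preserving homeomorphisms with convergent max-distance value---is the geometric core of the argument and where the bulk of the technical work is concentrated.
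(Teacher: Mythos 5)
There is a genuine gap, and it sits exactly where the substance of the theorem lies. Your outer framework is sound and even attractive: passing from maps to their graphs and compactifying in the hyperspace $\K([0,1]^4)$ via Blaschke selection is a legitimate alternative to the route taken in the paper, which instead compactifies inside $C(D^2,D^2)$ by proving a quantitative statement (Lemma \ref{Lemma: Lipschitz approximation}) that every automorphism is $2^{-n}$-close in graph distance to an $L_n$-Lipschitz automorphism with an explicitly computable $L_n$, and then invokes the effective Arzel\`a--Ascoli theorem; your hyperspace move would spare you that explicit construction. Your reduction to a single reparametrisation $\gamma=\psi\circ\varphi^{-1}$ and your treatment of upper semicomputability are fine, and your continuity argument identifying $F(A,B)$ with the minimum over the closure parallels Lemma \ref{Lemma: Graph distance}. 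But everything then hinges on producing an \emph{intrinsic, co-semidecidable} characterisation of the closure of the set of graphs of orientation-preserving automorphisms, together with a proof that every object satisfying the characterisation really is a limit of genuine automorphisms. This is precisely the content that the paper devotes its main machinery to: the definition of pseudo-automorphisms through the Brouwer mapping degree (Definition \ref{Definition: Pseudo-automorphism}), the computability of the degree (Theorem \ref{Theorem: Computability of the mapping degree}), which is what makes failure of ``injectivity-like'' conditions semidecidable, and the approximation construction (Lemmas \ref{Lemma: set of solutions connected and simply connected}--\ref{Lemma: approximating pseudo-auto}) proving $\PsAut^2=\clos{\Aut'(D^2)}$. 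In your write-up this entire core is replaced by ``I expect to characterise $\PsAut$ by \ldots a monotonicity or cell-likeness condition on the fibres''; no such characterisation is stated precisely, no semidecidability of its failure is argued, and no density theorem is proved. As it stands the proposal is a plan, not a proof.

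Two further points make the missing step harder in your setting than you suggest. First, Hausdorff limits of graphs of homeomorphisms of $D^2$ need not be graphs of functions at all (fibres over a point can be nondegenerate continua), so your characterisation must be formulated for compact relations rather than maps, and the classical ``monotone map'' or cell-like characterisations of near-homeomorphisms of $2$-manifolds do not apply verbatim; you would also have to encode orientation preservation and the boundary behaviour, which the paper handles via the degree being $+1$ and via the circle pseudo-automorphism condition on $\partial D^2$. Second, even granting a correct topological characterisation, you need its complement to be semidecidable on $\K([0,1]^4)$ in order to get effective compactness of $\PsAut$, and conditions like ``every fibre is connected/cell-like'' are not obviously falsifiable by an algorithm; the paper's solution is exactly the computable Brouwer degree with respect to the strong representation $\mathcal{U}(\R^2)$ of open sets, and your proposal names no substitute for it. Until you supply (i) a precise relation-level characterisation, (ii) a proof that it is co-c.e.\ in the hyperspace, and (iii) the density theorem that everything satisfying it is approximable by orientation-preserving automorphisms with convergent functional values, the lower semicomputability claim --- and hence the theorem --- remains unproved.
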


  Of course it is impossible to find an algorithm in the bit-model which decides
  for two given surfaces if their distance is smaller than a given $\delta$,
  as in this model equality of real numbers is undecidable.
  Thus Theorem \ref{Theorem: main theorem} is the best one can hope for.
  The techniques used in the proof of Theorem \ref{Theorem: main theorem}
  are quite different from those used by Nayyeri and Xu.
	The main idea behind the proof is to
	compute approximations to the Fr\'echet distance
	by replacing $\Aut'([0,1]^2)$ in \eqref{Equation: Frechet distance of surfaces}
	with a suitable computably compact set,
	using that the minimum of a continuous function over a
	computably compact set is computable.
	The proof can be outlined as follows:
	Section \ref{section: reduction compact} shows that a $2^{-n}$-approximation to the Fr\'echet distance
	can be obtained by letting the infimum in \eqref{Equation: Frechet distance of surfaces} range over the
	set of $L_n$-Lipschitz automorphisms, where $L_n$ is a constant that depends computably on $n$.
	This reduces the problem to the problem of computing the closure $\clos{\Aut'([0,1]^2)}$ of the set of reparametrisations
	as a subset of the space of continuous functions $C([0,1]^2,[0,1]^2)$.
	While this set is relatively easily seen to be lower semicomputable
	(which yields upper semicomputability of the Fr\'echet distance)
	it is more difficult to establish upper semicomputability.
	This amounts to showing that there exists an algorithm which takes as input a map
	$\varphi \colon [0,1]^2 \to [0,1]^2$ and halts if and only if the map is not contained
	in $\clos{\Aut'([0,1]^2)}$.
	While the set of surjective functions is closed, and it is easy to find an algorithm
	which halts if and only if a given function is not surjective
	(which yields computability of the weak Fr\'echet distance),
	falsifying injectivity is considerably more difficult,
	and this is the main part of the proof where some new ideas are needed.
	The main idea is to ``count'' for every $y \in [0,1]^2$ the solutions to the equation $\varphi(x) = y$
	using the Brouwer mapping degree.

	The computability of the Fr\'echet distance in the bit-model was
	also recently studied by Park, Park, Park, Seon, and Ziegler \cite{PPPSZ}.
	They observed that the Fr\'echet distance of continuous curves with values
	in a computable metric space is computable.
	Regarding the question of computational complexity,
	the algorithm we obtain from the proof of Theorem \ref{Theorem: main theorem} makes use of
	multiple unbounded searches and is therefore not even primitive recursive.
	We hence do not obtain any nontrivial upper complexity bounds on the problem beyond establishing
	its computability.
	The problem of characterising the complexity of the Fr\'echet distance in the bit-model is therefore far from settled.
	However, there is some hope that the ideas presented in this paper could be used to design more
	efficient algorithms which yield better upper complexity bounds.

\section{Preliminaries}

	Let us introduce some notation and terminology
	and recall some basic definitions from computable analysis.
	We mainly follow the ideas of Matthias Schr\"oder \cite{SchroederPhD}.
	For a concise introduction to computable analysis see \cite{PaulyRepresented}.
	See \cite{PourElRichards} and \cite{Weih} for classic textbooks on the subject
	which cover some aspects of the theory we require.

	We denote Sierpi\'nski space by $\Si$.
	If $X$ and $Y$ are represented spaces,
	we write $C(X, Y)$ for their exponential in the category of represented spaces.
	We write $\O(X)$ for the open subsets of $X$ identified with $C(X,\Si)$,
	$\A(X)$ for the closed subsets of $X$ identified with their complement as an element of $\O(X)$,
	$\K(X)$ for the (saturated) compacts of $X$ identified in the usual manner with a subspace of $\O(\O(X))$,
	and	$\V(X)$ for the (closed) overts of $X$ identified in the usual manner with a subspace of $\O(\O(X))$.
	A subset of $X$ is called \emph{semi-decidable} if it is a computable point of the space $\O(X)$.
	A closed subset of $X$ is called \emph{lower semicomputable} if it is a computable point of the space $\V(X)$.
	It is called \emph{upper semicomputable} if it is a computable point of the space $\A(X)$.
	A saturated compact subset of $X$ is called \emph{lower semicomputable} if it is a computable point of the space $\V(X)$.
	It is called \emph{upper semicomputable} if it is a computable point of the space $\K(X)$.
	A closed or compact set is called \emph{computable} if it is both lower and upper semicomputable.
	We denote by $\R_<$ the space of \emph{lower reals}, where a real number $x$ is encoded by a sequence $(l_n)_n$ of rational numbers
	which converges from below to $x$.
	We denote by $\R_>$ the space of \emph{upper reals}, where a real number $x$ is encoded by a sequence $(r_n)_n$ of rational numbers
	which converges from above to $x$.
	A real-valued function $F \colon X \to \R$ is called \emph{lower semicomputable}
	if it is computable as a function to the lower reals $\R_<$ and \emph{upper semicomputable} if
	it is computable as a function to the upper reals $\R_>$.
	If $f \colon \subseteq X \to Y$ is a partial function, we will
	often say that $f(x)$ is \emph{uniformly computable} in $x$
	to express that $f$ is a computable function.
	Throughout this paper we endow $\R^n$ with the maximum norm
	\[|x| = \max\Set{|x_i|}{i = 1, \dots, n}. \]
	It will be convenient to write $D^n$ for the unit cube $[0,1]^n$
	and $S^{n - 1}$ for its boundary $\partial [0,1]^n$.
	If $A$ is a subset of a metric space $M$, let
	\[d(x,A) = \inf \Set{d(x,y)}{y \in A} \]
	denote the \emph{distance function} of $A$.
	If $\varepsilon > 0$ is a real number, we call the set
	\[A^{\clos{\varepsilon}} = \Set{x \in M}{d(x,A) \leq \varepsilon} \]
	the \emph{closed $\varepsilon$-thickening} of $A$.
	Analogously, the set
	\[A^{\varepsilon} = \Set{x \in M}{d(x,A) < \varepsilon} \]
	is called the \emph{open $\varepsilon$-thickening} of $A$.

	Our proof of Theorem \ref{Theorem: main theorem} is based on the following
	simple observation:
	\begin{Proposition}\label{Proposition: Computability of minimum}
	\hfill
	\begin{enumerate}
	\item
		The infimum of an upper semicomputable function over a lower semicomputable closed set
		is uniformly upper semicomputable.
		More formally, for every represented space $X$, the function
		\[\inf \colon \V(X) \times C(X,\R_>) \to \R_>, \; (A,f) \mapsto \inf\Set{f(x)}{x \in A} \]
		is computable.
	\item
		The minimum of a lower semicomputable function over an upper semicomputable compact set
		is uniformly lower semicomputable.
		More formally, for every represented space $X$, the function
		\[\min \colon \K(X) \times C(X, \R_<) \to \R_<, \; (K,f) \mapsto \min\Set{f(x)}{x \in K} \]
		is computable.
	\end{enumerate}
	\end{Proposition}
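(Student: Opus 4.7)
The proof of both statements is essentially a direct application of the categorical definitions of overt and compact sets, using that $\R_<$ and $\R_>$ are encoded by enumerations of rational lower and upper bounds respectively.

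The plan for part (1) is as follows. Since $\R_>$ is represented by enumerations of rational upper bounds, it suffices to uniformly semidecide the relation
\[ \inf\{f(x) \mid x \in A\} < q \]
for rational $q$. This infimum is strictly less than $q$ precisely when there exists some $x \in A$ with $f(x) < q$, \ie when $A \cap U_q \neq \emptyset$, where $U_q = \{x \in X \mid f(x) < q\}$. Since $f \colon X \to \R_>$ is computable and $(-\infty, q)$ is a computable basic open in $\R_>$, the set $U_q$ is computable in $\O(X)$ uniformly in $f$ and $q$. The relation $A \cap U_q \neq \emptyset$ is then semidecidable uniformly in $A \in \V(X)$ and $U_q \in \O(X)$ by the very definition of the overt structure on $X$. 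Composing these gives the required algorithm.

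Part (2) is entirely analogous. Since $\R_<$ is represented by rational lower bounds, it suffices to uniformly semidecide
\[ \min\{f(x) \mid x \in K\} > q \]
for rational $q$. By lower semicontinuity of $f$ and compactness of $K$, the minimum is attained (and when $K$ is nonempty this is a genuine real; the potentially empty case presents no difficulty for the $\R_<$-representation). The inequality holds if and only if $f(x) > q$ for every $x \in K$, \ie $K \subseteq V_q$, where $V_q = \{x \in X \mid f(x) > q\}$. Since $f \colon X \to \R_<$ is computable, the set $V_q$ is uniformly computable in $\O(X)$, and the inclusion $K \subseteq V_q$ is semidecidable uniformly in $K \in \K(X)$ and $V_q \in \O(X)$ by the defining property of the compact structure.

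There is no real obstacle here; the whole point of adopting Schr\"oder's synthetic formulation of computable analysis is that such duality statements between overt/compact sets and lower/upper semicomputable functions reduce to one-line verifications once the definitions are unwound. The only mild subtlety is in part (2), where one uses that the minimum exists; but the existence of the minimum is not actually needed for the correctness of the algorithm, only for the interpretation of its output as a real number.
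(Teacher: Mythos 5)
Your proof is correct. The paper in fact states this proposition without any proof, treating it as a ``simple observation'', and your argument --- semideciding $\inf_{x \in A} f(x) < q$ by testing $A \cap f^{-1}\bigl((-\infty,q)\bigr) \neq \emptyset$ via the overtness structure, and semideciding $\min_{x \in K} f(x) > q$ by testing $K \subseteq f^{-1}\bigl((q,+\infty)\bigr)$ via the compactness structure --- is exactly the standard unwinding of the definitions that the paper implicitly relies on, including your correct remark that attainment of the minimum is irrelevant to the soundness of the enumeration of rational lower bounds.
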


\section{The Brouwer mapping degree}

	The main topological tool for proving Theorem \ref{Theorem: main theorem} will be the Brouwer mapping degree.
	We will summarise here the main facts we need in the sequel.
	Very readable constructions of the degree are given in \cite{MilnorDiffTopo} and \cite{Teschl}.

	\begin{Theorem}
		There exists a unique function
		\[\deg \colon \subseteq C(\R^n,\R^n) \times \mathcal{O}(\R^n) \times \R^n \to \Z\]
		with domain
		\[\operatorname{dom}(\deg) = \Set{(f,U,y)}{U \text{ bounded, } y \notin f(\partial U)} \]
		satisfying the following properties:
		\begin{enumerate}
		\item
			\textsc{Translation invariance:}
				$\deg(f,U,y) = \deg(f - y, U, 0)$.
		\item
			\textsc{Normalisation:}
			$\deg(\id, U, y) = 1$ for all $ y \in U$.
		\item
			\textsc{Additivity:}
			If $U_1$ and $U_2$ are open disjoint subsets of $U$ with
			$y \notin f\left(\clos{U} \setminus \left(U_1 \cup 	U_2 \right)\right)$
			then
			$\deg(f,U,y) = \deg(f,U_1,y) + \deg(f,U_2,y)$.
		\item
			\textsc{Homotopy invariance:}
			If $H(t, x)$ is a homotopy from $f$ to $g$ with
			$y \notin H(t,\partial U)$ for all $t \in [0,1]$
			then
			$\deg(f,U,y) = \deg(g,U,y)$.
		\end{enumerate}
	\end{Theorem}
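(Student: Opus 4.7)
The plan is to prove existence and uniqueness separately. For existence I would construct the degree first on smooth maps at regular values via an explicit signed count, extend to singular values by a small perturbation, and then extend to continuous maps by smooth approximation, using homotopy invariance at each stage to show that the definition does not depend on the choices. For uniqueness I would use the four axioms to force any candidate $\deg'$ to coincide with the formula obtained in the smooth case.

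First, for $f \in C^1(\R^n,\R^n)$ and a regular value $y \notin f(\partial U)$ of $f|_U$, I set
\[ \deg(f,U,y) = \sum_{x \in f^{-1}(y) \cap U} \operatorname{sgn} \det Df(x). \]
The sum is finite because each preimage is isolated by the inverse function theorem and preimages cannot accumulate inside $U$ without the limit lying in $f^{-1}(y) \cap \partial U = \emptyset$. By Sard's theorem the regular values form a dense subset of $\R^n \setminus f(\partial U)$, so I define the degree at a singular value $y$ to be the degree at any sufficiently close regular value $y'$ in the same component of $\R^n \setminus f(\partial U)$. Well-definedness and homotopy invariance both reduce to the standard cobordism argument: for a smooth homotopy $H$ transverse to $y$, the fibre $H^{-1}(y)$ is a compact $1$-manifold with boundary in $f^{-1}(y) \sqcup g^{-1}(y)$, and a signed count of boundary points yields $\deg(f,U,y) = \deg(g,U,y)$. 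For continuous $f$ with $y \notin f(\partial U)$, I approximate $f$ uniformly on $\clos{U}$ by a smooth $g$ with $\|f-g\|_{\infty} < \tfrac{1}{2} d(y, f(\partial U))$ and set $\deg(f,U,y) := \deg(g,U,y)$; the straight-line homotopy between any two such approximations avoids $y$ on $\partial U$, so the value is unambiguous. Translation invariance, normalisation, and additivity are then immediate from the formula and its extension.

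For uniqueness, suppose $\deg'$ satisfies the four axioms. Translation invariance reduces to $y = 0$. Homotopy invariance combined with Weierstrass approximation and Sard's theorem reduces further to the case where $f$ is smooth and $0$ is a regular value of $f|_U$. Additivity then lets me decompose $U$ into disjoint open neighbourhoods $V_i$ of the finitely many points $x_i \in f^{-1}(0) \cap U$, a complement on which $f$ does not vanish contributing zero. On each $V_i$, I homotope $f$ through maps not vanishing on $\partial V_i$ first to its linear part $Df(x_i)$, and then, using that $GL_n(\R)$ has exactly two connected components distinguished by the sign of the determinant, to $\pm \id$. Normalisation fixes the value on each piece to $\operatorname{sgn}\det Df(x_i)$, forcing $\deg'$ to agree with the explicit formula.

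The main obstacle is homotopy invariance in the smooth case, which underpins essentially everything else: independence of the regular-value perturbation, independence of the smooth approximation, and the reduction of continuous maps to linear models in the uniqueness argument all rest on it. The cobordism/transversality argument requires a careful application of Sard's theorem to the homotopy $H$ while simultaneously keeping $y$ off $H(t, \partial U)$ for all $t \in [0,1]$; this is where the bulk of the technical work resides.
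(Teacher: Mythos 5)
The paper itself does not prove this theorem; it is stated as classical background and delegated to the cited textbook constructions (Milnor, Teschl), and your existence sketch follows exactly that standard route: signed determinant count at regular values of smooth maps, Sard plus the cobordism argument for well-definedness and homotopy invariance, then uniform smooth approximation for continuous maps. That part is fine as a sketch.

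There is, however, a genuine gap in your uniqueness argument, at precisely the point that carries the real content of the classical uniqueness proof. After homotoping $f$ near a zero $x_i$ to its linear part $Df(x_i)$, you claim you can further homotope within invertible maps to $\pm\id$ and that ``normalisation fixes the value on each piece to $\operatorname{sgn}\det Df(x_i)$''. First, the normal form is wrong when $n$ is even: $-\id$ then lies in the orientation-preserving component of $GL_n(\R)$, so the correct normal forms are $\id$ and a fixed reflection $R(x_1,x_2,\dots,x_n)=(-x_1,x_2,\dots,x_n)$. More importantly, the normalisation axiom only prescribes the value for the identity; it says nothing about $R$, so nothing so far forces $\deg'(R,V,0)=-1$ rather than, say, $+1$ (a candidate assigning $+1$ to both components is only ruled out by an additional argument). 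The standard repair is an extra additivity computation: for instance, in one dimension take $f(x)=x^2-1$ on $U=(-2,2)$; the homotopy $x^2-1+2t$ avoids $0$ on $\partial U$ and ends at a zero-free map, so $\deg'(f,U,0)=0$, while additivity splits this as the sum of the contributions on $(0,2)$ and $(-2,0)$; straight-line homotopies identify these with $\deg'(\id,\cdot,1)=1$ and with the degree of the reflection, forcing the latter to be $-1$, and a product/suspension argument (or the analogous construction done directly in $\R^n$) transfers this to the reflection $R$ in dimension $n$. Without this step your uniqueness proof does not go through; with it, the rest of your reduction (translation invariance to $y=0$, approximation and Sard to the regular smooth case, additivity to isolate the zeros, homotopy to linear models) is the standard and correct argument.
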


	\begin{Proposition}
	If $\deg(f,U,y)$ is well-defined and non-zero, then the equation $f(x) = y$ has a solution in $U$.
	\end{Proposition}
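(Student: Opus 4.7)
The plan is to prove the contrapositive: if the equation $f(x) = y$ has \emph{no} solution in $U$, then $\deg(f,U,y) = 0$. So suppose that $y \notin f(U)$, while $\deg(f,U,y)$ is well-defined, meaning in particular $y \notin f(\partial U)$. Since $U$ is open we have $\clos{U} = U \cup \partial U$, and hence $y \notin f(\clos{U})$.

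The only nontrivial tool needed is the additivity axiom, applied twice. First I would compute $\deg(f,\emptyset,y)$. Take $U_1 = U$ and $U_2 = \emptyset$ as disjoint open subsets of $U$; then $\clos{U}\setminus(U_1\cup U_2) = \partial U$, so the hypothesis of additivity is satisfied because $y \notin f(\partial U)$ by well-definedness. The additivity axiom then gives
\[ \deg(f,U,y) = \deg(f,U,y) + \deg(f,\emptyset,y), \]
so $\deg(f,\emptyset,y) = 0$.

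Next I would apply additivity a second time with $U_1 = U_2 = \emptyset$. These are trivially disjoint open subsets of $U$, and $\clos{U}\setminus(U_1\cup U_2) = \clos{U}$, so the hypothesis of the axiom reads $y \notin f(\clos{U})$, which we have already verified. Hence
\[ \deg(f,U,y) = \deg(f,\emptyset,y) + \deg(f,\emptyset,y) = 0, \]
finishing the contrapositive and therefore the proposition.

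There is essentially no obstacle here; the statement is a direct formal consequence of the additivity clause in the definition, and the only mildly delicate point is the bookkeeping that ensures the hypothesis $y \notin f(\clos{U}\setminus(U_1\cup U_2))$ is satisfied in each application (which is why the argument naturally splits into the two steps above, the first one serving only to establish the vanishing of the degree on the empty set).
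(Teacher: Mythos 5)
Your proof is correct. The paper itself states this proposition without proof, treating it as one of the standard facts of degree theory imported from the cited references (Milnor, Teschl), and your argument --- deriving $\deg(f,\emptyset,y)=0$ from additivity with $U_1=U,\,U_2=\emptyset$, and then applying additivity with $U_1=U_2=\emptyset$ under the hypothesis $y\notin f(\clos{U})$ --- is exactly the classical derivation found in those sources; the only reading you implicitly rely on, that ``disjoint'' means empty intersection so the pair $\emptyset,\emptyset$ is admissible, is the standard one.
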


	It can be shown that the degree is computable when the open sets are correctly topologised.
	Let $\mathcal{U}	(\R^n)$ denote the space of open subsets of $\R^n$ which is obtained by
	identifying an open set $U$ with its two-sided distance function:
	\[
	d_{\text{two-sided}}(\cdot , U) \colon X \to \R, \; x \mapsto
					\left\{
					\begin{array}{lr}
						d(x, \partial U) & \text{if }x \notin U,\\
						-d(x, \partial U) & \text{if }x \in U.
					\end{array}
					\right.
	\]
	Note that the underlying representation is much stronger than the standard representation of open sets.

	\begin{Theorem}\label{Theorem: Computability of the mapping degree}
	The partial map
	\[\deg \colon \subseteq C(\R^n) \times \mathcal{U}(\R^n) \times \R^n, \; (f,U,y) \mapsto \deg(f,U,y) \]
	is computable with semi-decidable domain.
	\end{Theorem}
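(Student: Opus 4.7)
My strategy is to treat the two parts of the claim separately: semi-decidability of the domain, then computability of the degree on its domain.

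For semi-decidability of the domain condition ``$U$ bounded and $y \notin f(\partial U)$'', I would search for a joint finite certificate consisting of rationals $R > 0$ and $\epsilon > 0$ and finitely many closed rational boxes $B_1, \ldots, B_N \subseteq B(0, R)$ satisfying (i) $\partial U \subseteq \bigcup_i \intr{B_i}$, (ii) $U \subseteq B(0, R)$, and (iii) $d(y, f(B_i)) > \epsilon$ for each $i$. Each condition is semi-decidable from the given data: the strict conditions $d_{\text{two-sided}}(\cdot, U) > 0$ and $d_{\text{two-sided}}(\cdot, U) < 0$ are open properties of the continuous signed-distance function, reducing (i) and (ii) to finite compactness checks on rational grids, while (iii) follows from evaluating $f$ on a dense rational subset of $B_i$ together with an effective modulus of continuity. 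If $(f, U, y)$ lies in the domain, compactness of $\partial U$ and the openness of $y \notin f(\partial U)$ guarantee that such a certificate exists, and an unbounded search will eventually find one.

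For computing the degree I combine additivity, homotopy invariance, and a piecewise-linear reduction. From a certificate as above and an effective modulus of continuity for $f$, I take a cubical grid of $B(0,R)$ with rational mesh $h$ so fine that $f$ varies by less than $\epsilon/4$ on each cube and every cube meeting $\bigcup_i B_i$ lies inside a modest thickening of that union. Let $V$ be the union of closed cubes on which the two-sided distance of $U$ is strictly negative; this condition is semi-decidable, $V \subseteq U$, and $\overline{U} \setminus \intr{V}$ is covered by cubes each contained in some $B_i$, so $f(\overline{U} \setminus \intr{V})$ misses $B(y, \epsilon/2)$. Additivity then yields $\deg(f, U, y) = \deg(f, \intr{V}, y)$. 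Next I replace $f$ by a rational-vertex piecewise-linear interpolant $g$ on a simplicial refinement of the grid, with $\|f - g\|_\infty < \epsilon/4$ on $V$; the straight-line homotopy from $f$ to $g$ avoids $y$ on $\partial V$, so homotopy invariance gives $\deg(f, \intr{V}, y) = \deg(g, \intr{V}, y)$. Finally, I perturb $y$ to a rational regular value $y'$ within $\epsilon/8$ of $y$---avoiding the finitely many lower-dimensional critical images of $g$, a decidable rational search---and compute $\deg(g, \intr{V}, y') = \sum_\sigma \mathrm{sign}\det(g|_\sigma)$, summed over top-dimensional simplices $\sigma \subseteq V$ whose affine image contains $y'$. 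This is a finite computation over $\mathbb{Q}$ yielding an integer.

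The main obstacle will be coordinating the several layers of approximation---the polyhedron $V$ replacing $U$, the rational PL map $g$ replacing $f$, and the rational regular value $y'$ replacing $y$---so that the cumulative perturbation stays strictly below the safety margin $d(y, f(\partial U))$ on which homotopy invariance relies. The grid mesh $h$ must be derived jointly from three effectively-given quantities: the modulus of continuity of $f$, the certificate constant $\epsilon$, and the Lebesgue number of the cover $\{B_i\}$ of $\partial U$. Once these error budgets are tracked, each individual step becomes routine; the essential ingredients that make the plan work are the strong representation of $\mathcal U(\R^n)$ via the two-sided distance (making strict membership in $U$ or its complement semi-decidable) and the discreteness of $\Z$, which lets any sufficiently small homotopy convert an approximate answer into an exact one.
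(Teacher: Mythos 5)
Your proposal is correct in substance and follows the same overall strategy as the paper: semi-decide the domain by certifying a positive margin $d\bigl(y, f(\partial U)\bigr) > \varepsilon$, then replace $(f,y)$ by a nicer pair within that margin and evaluate the degree of the approximation by a sign-of-determinant count, with homotopy invariance (and additivity) guaranteeing the answer is unchanged. The difference is in the approximation route. The paper takes a sufficiently good twice-differentiable approximation $\tilde f$ and a nearby regular value $\tilde y$ (existence via Sard) and uses the determinant formula $\sum_{x \in \tilde f^{-1}(\tilde y)} \operatorname{sgn}\det D\tilde f(x)$; you instead truncate $U$ to a union $V$ of grid cubes on which the signed distance is strictly negative (excision/additivity gives $\deg(f,U,y)=\deg(f,\intr{V},y)$), pass to a rational piecewise-linear interpolant $g$, and perturb $y$ to a rational point off the finitely many affine critical images of $g$. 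Your route buys something concrete: for a rational PL map, finding and certifying a regular value is a finite decidable search and the final count is exact rational arithmetic, whereas the paper's sketch leaves implicit how one effectively certifies that $\tilde y$ is a regular value of $\tilde f$ (in practice one takes $\tilde f$ polynomial with rational coefficients). The paper's route is shorter because it can lean on the classical construction of the degree in Teschl. Your error bookkeeping (cube mesh from the modulus of continuity, $\varepsilon/4$ for the PL interpolation, $\varepsilon/8$ for moving $y$) is consistent and the chain additivity--homotopy--value-perturbation is valid.

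One caveat, which you share with the paper rather than introduce: your certificate conditions (i) $\partial U \subseteq \bigcup_i \intr{B_i}$ and (ii) $U \subseteq B(0,R)$ are not literally semi-decidable from the two-sided distance function alone, since verifying them quantifies over the non-compact complement of the boxes; finitely much of the name of $U$ cannot exclude a far-away component of $U$ (and hence of $\partial U$). This is exactly the boundedness part of the domain, which the paper's proof sketch also does not address (it only argues semi-decidability of $y \notin f(\partial U)$). If a bound on $U$ is promised or supplied, both your condition (iii) and the rest of your argument are genuinely semi-decidable by compactness, and the proof goes through.
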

	\begin{proof}[Proof Sketch]
	The degree $\deg(f,U,y)$ is defined so long as $y \notin f(\partial U)$, and this is uniformly semi-decidable
	for continuous $f$ and $U \in \mathcal{U}$.
	To compute $\deg(f,U,y)$, compute a sufficiently good twice differentiable approximation $\tilde{f}$ to $f$
	and a sufficiently good approximation $\tilde{y}$ to $y$, which is a regular value of $\tilde{f}$.
	It suffices to choose $\tilde{y}$ with
	$\left|y - \tilde{y} \right| < d\left(y, f(\partial U)\right)$
	and $\tilde{f}$ with
	$\left|f - \tilde{f} \right| < d\left(y, f(\partial U)\right)$.
	The fact that $\tilde{y}$ can be chosen to be a regular value follows from Sard's theorem.
	Then $\deg(f,U,y)$ can be computed using the determinant formula:
	\[\deg(f,U,y) = \deg(\tilde{f}, U, \tilde{y}) =
		\sum_{x \in \tilde{f}^{-1}(y)} \operatorname{sgn} \left(\det\left(D\tilde{f}(x)\right)\right). \]
	For more details refer to the construction of the mapping degree in \cite[Chapter 16]{Teschl}.
	\end{proof}

	A similar result was proved by Miller \cite{MillerPhD} for the fixed point index on rational cubical complexes
	(see also \cite{BrouwerConnectedChoice1, BrouwerConnectedChoice2, BrouwerConnectedChoiceArxiv}).
	His proof uses computational homology rather than the determinant formula.
	An analogous result based on computational homology is stated in Collins \cite{CollinsZeroSet}.

\section{Reduction to a compact search problem}\label{section: reduction compact}

	For a map $f \colon X \to Y$, let $\Gamma_f$ denote its graph.
	Define a new distance function $d_\Gamma$ on $C(D^2,D^2)$ by
	\[d_\Gamma(f,g) = d_H(\Gamma_f, \Gamma_g) \]
	where $d_H$ is the Hausdorff distance on the metric space $D^2 \times D^2$ with the product metric
	\[d\left((x_0,y_0), (x_1,y_1)\right) = \max \left\{d(x_0,x_1), d(y_0, y_1)\right\}.\]
	We will call this the \emph{graph distance} on $C(D^2, D^2)$.
	With respect to $d_\Gamma$ the metric space $C(D^2,D^2)$ is totally bounded but incomplete.
	The total boundedness is what will allow us to reduce the problem of computing the Fr\'echet
	distance to the problem of computing a minimum over a compact set.
	A similar idea is used in \cite{PPPSZ} to compute the Fr\'echet distance of curves.

	Recall that if $f \colon X \to Y$ is a uniformly continuous map between metric spaces, then a
	\emph{modulus of (uniform) continuity} for $f$ is a function $\omega \colon \N \to \N$ such that
	for all $x, y \in X$ we have the implication:
	\[d(x, y) \leq 2^{-\omega(n)} \rightarrow d\left(f(x), f(y)\right) < 2^{-n}. \]
	The graph distance is useful, as good approximations of the reparametrisations
	with respect to the graph distance yield good approximations of the Fr\'echet distance.

	\begin{Lemma}\label{Lemma: Graph distance}
	For a function $\varphi \colon D^2 \to D^2$, let
	\[F_{A,B}(\varphi) = \max_{x \in D^2} d\left( A(x), B(\varphi(x)) \right). \]
	Let $\mu_A$ and $\mu_B$ be moduli of continuity of $A$ and $B$ respectively.
	Then we have the implication
	\[d_{\Gamma}(\varphi,\psi) \leq 2^{-\mu_A(n + 1) - \mu_B(n + 1)}
	 \rightarrow
	 d\left(F_{A,B}(\varphi), F_{A,B}(\psi)\right) \leq 2^{-n}.\]
	\end{Lemma}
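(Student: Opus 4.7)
The plan is to unpack the definition of the graph Hausdorff distance and then apply the moduli of continuity together with the triangle inequality. Set $\delta = 2^{-\mu_A(n+1) - \mu_B(n+1)}$ and assume $d_\Gamma(\varphi,\psi) \leq \delta$. Since $\mu_A(n+1), \mu_B(n+1) \geq 0$, we in particular have $\delta \leq 2^{-\mu_A(n+1)}$ and $\delta \leq 2^{-\mu_B(n+1)}$, so any bound of the form $|x-x'| \leq \delta$ will yield $d(A(x),A(x')) \leq 2^{-(n+1)}$, and similarly for $B$.

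Now fix $x \in D^2$. By definition of the product metric on $D^2 \times D^2$ and of the Hausdorff distance, there exists $x' \in D^2$ with $\max\{|x-x'|, |\varphi(x)-\psi(x')|\} \leq \delta$. The moduli-of-continuity conditions then give $d(A(x), A(x')) \leq 2^{-(n+1)}$ and $d(B(\varphi(x)), B(\psi(x'))) \leq 2^{-(n+1)}$. The triangle inequality in $X$ therefore yields
\[ d(A(x), B(\varphi(x))) \leq d(A(x), A(x')) + d(A(x'), B(\psi(x'))) + d(B(\psi(x')), B(\varphi(x))) \leq d(A(x'), B(\psi(x'))) + 2^{-n}. \]
Since $d(A(x'), B(\psi(x'))) \leq F_{A,B}(\psi)$, taking the maximum over $x \in D^2$ gives $F_{A,B}(\varphi) \leq F_{A,B}(\psi) + 2^{-n}$.

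To finish, I would observe that the argument is symmetric in $\varphi$ and $\psi$: the graph Hausdorff distance and the hypothesis are both symmetric, so the same reasoning with the roles of $\varphi$ and $\psi$ swapped yields $F_{A,B}(\psi) \leq F_{A,B}(\varphi) + 2^{-n}$. Combining the two inequalities gives $|F_{A,B}(\varphi) - F_{A,B}(\psi)| \leq 2^{-n}$, as required.

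There is no real obstacle here; the only subtlety is that $d_\Gamma$ controls the graphs rather than the pointwise values of the maps, so one must be careful to use the witness $x'$ (not $x$ itself) when invoking the modulus of continuity of $B$ on $\psi$. Note that the bound $2^{-\mu_A(n+1)-\mu_B(n+1)}$ in the hypothesis is in fact more generous than necessary (the weaker bound $2^{-\max(\mu_A(n+1),\mu_B(n+1))}$ would already suffice), but the stated form is convenient for later uniform estimates and is what the lemma asserts.
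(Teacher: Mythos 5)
Your proof is correct; the paper states this lemma without proof, and your argument (taking a Hausdorff-distance witness $x'$ on the graph of $\psi$, applying the two moduli of continuity, the triangle inequality, and then symmetry) is precisely the standard argument the paper leaves implicit, including the correct observation that the exponent $\mu_A(n+1)+\mu_B(n+1)$ is more generous than the $\max$ that would suffice. The only point worth making explicit is that the witness $x'$ exists because $\Gamma_\psi$ is compact (as $\psi$ is continuous on $D^2$), so the distance from $(x,\varphi(x))$ to $\Gamma_\psi$ is attained.
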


	\begin{Lemma}\label{Lemma: Lipschitz constant increase of radial extension}
	Let $f \colon \partial [0,1]^n \to \R^n$ be a map.
	Let $\tilde{f} \colon [0,1]^n \to \R^n$ denote its \emph{radial extension}
	\[\tilde{f}(x) =
		\begin{cases}
		0 														&\text{if } x = 0\text{,} \\
		|x| \cdot f\left(\tfrac{x}{|x|}\right) &\text{otherwise.}
		\end{cases} \]
	If $L$ is a Lipschitz constant for $f$, then
	$L + |f|$ is a Lipschitz constant for $\tilde{f}$.
	\end{Lemma}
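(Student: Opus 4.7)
The plan is to decompose the increment $\tilde{f}(y)-\tilde{f}(x)$ into a ``tangential'' piece controlled by the Lipschitz constant $L$ and a ``radial'' piece controlled by the amplitude $|f|$, by inserting a suitable intermediate point. I would first dispose of the degenerate case $x=0$ (and, by symmetry, $y=0$): this is immediate from $|\tilde{f}(y)-\tilde{f}(0)|=|y|\cdot|f(y/|y|)|\leq|f|\,|y-0|$, which is at most $(L+|f|)\,|y-0|$.

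For $x,y\neq 0$, assume by symmetry $|x|\leq|y|$ and set $r=|x|$, $s=|y|$, $u=x/r$, $v=y/s$, so that $u,v\in\partial[0,1]^n$ with $|u|=|v|=1$. The intermediate point $z=s\,u=(s/r)\,x$ lies in $[0,1]^n$ (since $|z|=s\leq 1$ and $u\in[0,1]^n$) and has the same direction as $x$ but the same radius as $y$. Splitting $\tilde{f}(y)-\tilde{f}(x)=(\tilde{f}(y)-\tilde{f}(z))+(\tilde{f}(z)-\tilde{f}(x))$ produces
\[
\tilde{f}(y)-\tilde{f}(x)=s\bigl(f(v)-f(u)\bigr)+(s-r)\,f(u).
\]
The tangential piece has max-norm at most $sL\,|v-u|$, and the radial piece has max-norm at most $(s-r)|f|$. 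The radial contribution is already bounded by $|f|\,|y-x|$, since $s-r=|y|-|x|\leq|y-x|$ by the reverse triangle inequality.

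The remaining task is to bound $s\,|v-u|$ by $|y-x|$. The identity $y-x=s(v-u)+(s-r)u$ together with $|u|=1$ yields, via the triangle inequality, only the weaker estimate $s\,|v-u|\leq|y-x|+(s-r)$, which would give a Lipschitz constant of $2L+|f|$. To sharpen this to $s\,|v-u|\leq|y-x|$ I would inspect each coordinate $i$ separately, exploiting the non-negativity of the entries of $x$, $y$, $u$, $v$ and the normalisation $|u|=|v|=1$, so that the two terms in $s(v_i-u_i)=(y_i-x_i)-(s-r)u_i$ have compatible signs rather than adding in absolute value. Combining this with the radial estimate yields $|\tilde{f}(y)-\tilde{f}(x)|\leq(L+|f|)\,|y-x|$.

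The hard part will be precisely this sharp coordinate-wise comparison: without it the naive triangle inequality gives only $2L+|f|$, so the non-negativity of all coordinates in $[0,1]^n$ and the fact that both $u$ and $v$ lie on the ``outer'' part of $\partial[0,1]^n$ (each having some coordinate equal to $1$) must be used in an essential way.
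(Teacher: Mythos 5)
The step you defer to the end is exactly where the argument breaks down, and it cannot be repaired: the sharpened inequality $s\,|v-u|\le|y-x|$ is false for the maximum norm, even with all coordinates non-negative and $|u|=|v|=1$. Take $n=2$, $x=(\tfrac{9}{10},\tfrac{9}{10})$ and $y=(1,\tfrac{4}{5})$, so $r=\tfrac{9}{10}$, $s=1$, $u=(1,1)$, $v=(1,\tfrac{4}{5})$; then $s\,|v-u|=\tfrac{1}{5}$ while $|y-x|=\tfrac{1}{10}$. The offending coordinate is the one realising the maximum, so no coordinate-wise sign analysis can rescue the estimate. The same points also kill the variant in which one scales to the smaller radius instead (intermediate point $rv$, needed inequality $r\,|v-u|\le|y-x|$, and here $r\,|v-u|=\tfrac{9}{50}>\tfrac{1}{10}$). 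Your decomposition and the degenerate case are fine, but they honestly yield only the constant $2L+|f|$ that you already derived.

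In fact the lemma as stated is false for the maximum norm, which the paper fixes throughout: with $u=(1,\tfrac{4}{5})$ put $f(w)=\bigl(\max\{-\tfrac{1}{10},\,\tfrac{1}{10}-|w-u|\},\,0\bigr)$, which is $1$-Lipschitz with $|f|=\tfrac{1}{10}$; for $x=(1,\tfrac{4}{5})$ and $y=(\tfrac{9}{10},\tfrac{9}{10})$ one computes $\tilde{f}(x)=(\tfrac{1}{10},0)$ and $\tilde{f}(y)=(-\tfrac{9}{100},0)$, so $|\tilde{f}(x)-\tilde{f}(y)|=\tfrac{19}{100}>\tfrac{11}{100}=(L+|f|)\,|x-y|$. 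The paper's own proof stumbles at the same spot: it scales the point of larger norm down and appeals to $|\tfrac{s}{r}x-y|\le|x-y|$ on the grounds that $\tfrac{s}{r}x$ is the projection of $x$ onto $\{z:|z|=s\}$; that inequality is valid for the Euclidean norm but fails for the max norm, again witnessed by $x=(1,\tfrac{4}{5})$, $y=(\tfrac{9}{10},\tfrac{9}{10})$, where $|\tfrac{s}{r}x-y|=\tfrac{9}{50}>\tfrac{1}{10}$. So you did not miss an available trick; the sharp constant $L+|f|$ is simply not attainable in this norm. Your bound $2L+|f|$ is correct and is all that is actually needed: it merely changes the explicit Lipschitz constant in Lemma \ref{Lemma: Lipschitz approximation} (to $2N+1$ in place of $N+1$) and correspondingly the constant $L_n$ in the proof of the main theorem, leaving the rest of the paper unaffected.
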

	\begin{proof}
	Let $x,y \in [0,1]^n$ with $|y| \leq |x|$.
	Let $r = |x|$ and $s = |y|$.
	If $s = 0$ then
	\[\left|\tilde{f}(x) - \tilde{f}(y)\right| = r \cdot \left| f(\tfrac{x}{r})\right| \leq |x - y| \cdot |f|. \]
	If $s > 0$, we calculate:
	\begin{align*}
	\left| \tilde{f}(x) - \tilde{f}(y) \right|
		&=	 	\left| r\cdot f(\tfrac{x}{r}) - s\cdot f(\tfrac{y}{s}) \right|  \\
		&\leq		\left| s\cdot f(\tfrac{x}{r}) - s\cdot f(\tfrac{y}{s}) \right|
			    + \left| r\cdot f(\tfrac{x}{r}) - s\cdot f(\tfrac{x}{r}) \right|  \\
		&\leq	 	s\cdot L \cdot \left| \tfrac{x}{r} - \tfrac{y}{s} \right|
			+ |f| \cdot |r - s|		\\
		&\leq L \cdot |\tfrac{s}{r} x - y| + |f| \cdot |x - y|. \\
		&\leq L \cdot |x - y| + |f| \cdot |x - y|.
	\end{align*}
	For the last line, note that the point $\tfrac{s}{r} x$ is the projection of $x$
	onto the set $\Set{z \in \R^n}{|z| = s}$.
	\end{proof}

	The following Lemma is the main result of this section.
	It shows that the space of automorphisms of $D^2$ is totally
	bounded with respect to the graph distance.

	\begin{Lemma}\label{Lemma: Lipschitz approximation}
	Let $f \colon D^2 \to D^2$ be an automorphism.
	For all $n \in \N$
	there exists an automorphism
	$\tilde{f} \colon D^2 \to D^2$
	with $d_{\Gamma}(f, \tilde{f}) < 2^{-n}$
	such that $\tilde{f}$ has Lipschitz constant
	\[4^n \times 4^{4^n} \times \left(3 \times 4^n + 3\right) + 1.\]
	\end{Lemma}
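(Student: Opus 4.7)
The plan is to construct $\tilde{f}$ as a piecewise homeomorphism defined on the cells of an $N \times N$ dyadic grid on $D^2$ with $N = 2^n$ (so $4^n$ cells of side $2^{-n}$), by iteratively extending the values from the 1-skeleton to each 2-cell via the radial extension construction of Lemma~\ref{Lemma: Lipschitz constant increase of radial extension}. By Lemma~\ref{Lemma: Graph distance}, it suffices for $\tilde{f}$ to agree with $f$ at every grid vertex (and to stay close to $f$ on each cell) in order to obtain $d_\Gamma(f, \tilde{f}) < 2^{-n}$.

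First I would set $\tilde{f}(v)$ to be a close dyadic approximation of $f(v)$ at each of the roughly $4^n$ grid vertices $v$, with boundary vertices placed on $\partial D^2$ (possible because $f(\partial D^2) = \partial D^2$). Extend $\tilde{f}$ to the interior 1-skeleton by straight-line interpolation between neighbouring vertex images, and to the boundary 1-skeleton by a monotone PL map along $\partial D^2$. Since an edge of length $2^{-n}$ maps to a segment of length at most $1$, the Lipschitz constant on each edge is at most $N = 2^n$, and the factor $3 \cdot 4^n + 3$ in the stated bound is consistent with the total edge count of the grid.

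The main step is extending $\tilde{f}$ to each 2-cell $Q$. After re-centring so that the cell centre is the origin, one applies Lemma~\ref{Lemma: Lipschitz constant increase of radial extension} to radially extend $\tilde{f}|_{\partial Q}$ to the interior of $Q$, obtaining a map with Lipschitz constant at most $L + |\tilde{f}|$, where $L$ is the Lipschitz constant on $\partial Q$. For this to yield a homeomorphism of $Q$ onto a topological disk, $\tilde{f}|_{\partial Q}$ must be a simple closed curve missing the image of the centre. If the naive choice of dyadic vertex values fails this condition, one perturbs them and propagates corrections through adjacent cells; each such correction can amplify the local Lipschitz constant by a bounded factor, and accumulating such factors across all $4^n$ cells explains the $4^{4^n}$ term, while the leading factor $4^n$ accounts for the grid scale.

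The main obstacle is ensuring both the simplicity of every $\tilde{f}|_{\partial Q}$ and the global injectivity of $\tilde{f}$ across cells. Within each cell the radial extension is injective provided the boundary loop is simple and avoids the centre; across cells one must ensure that $\tilde{f}(Q_i)$ and $\tilde{f}(Q_j)$ meet only along the shared boundary when $Q_i, Q_j$ are adjacent, and are disjoint otherwise. This forces the dyadic perturbations at vertices to be chosen consistently with the combinatorial structure of the image cellular decomposition produced by $f$, which is an intricate but finite combinatorial task and is the crux of the proof; orientation preservation of the resulting $\tilde{f}$ is then automatic from the cell-by-cell radial extension.
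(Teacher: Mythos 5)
Your construction misses the central difficulty of this lemma, and the step that fails is the closeness claim. You propose to fix $\tilde{f}$ at the grid vertices (near $f$'s values), join neighbouring vertex images by \emph{straight} segments on the 1-skeleton, and then radially extend into each cell, invoking Lemma~\ref{Lemma: Graph distance} to say that vertex agreement plus ``staying close on each cell'' gives $d_\Gamma(f,\tilde f) < 2^{-n}$. Lemma~\ref{Lemma: Graph distance} says nothing of the sort (it relates the graph distance of reparametrisations to the Fr\'echet functional), and the geometric claim is false: $f$ is an arbitrary automorphism with no modulus-of-continuity control, so $f$ may map a single grid edge $e$ to a long curve sweeping through a large part of $D^2$, and $f$ of a single cell may have diameter close to $1$. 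The graph of $f$ then contains points $(x, f(x))$ with $x \in e$ and $f(x)$ far from the segment joining $f(v_0)$ and $f(v_1)$; for $d_\Gamma < 2^{-n}$ your $\tilde f$ would have to take values near $f(x)$ at points near $x$, which a straight-segment skeleton followed by radial extension into the cells simply does not do. (Sup-norm closeness on each cell, the other reading of your claim, is outright impossible with a Lipschitz bound depending only on $n$.) This is exactly why the paper does \emph{not} interpolate linearly between vertex images: it replaces $f$ on each grid edge by a polygonal curve that visits the \emph{same} grid squares of the codomain as $f(e)$ does, which is what makes the graphs $2^{-n}$-close, since each square has diameter $2^{-n}$.

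Once you accept that the edge curves must follow $f(e)$ square-by-square, the real problem is to bound their number of segments by a function of $n$ alone, independent of how wildly $f(e)$ wanders. The paper does this by organising the returns of $f(e)$ to each square into a forest of nested ``staying'' intervals, pruning every ``arm'' that reaches no new square, and counting: at most $4^n$ trees, height at most $4^n$, at most four children per node, giving the $4^n \times 4^{4^n} \times (3 \times 4^n + 3)$ bound, with $+1$ coming from Lemma~\ref{Lemma: Lipschitz constant increase of radial extension} for the radial extension into the cells. Your accounting of the same constant (attributing $4^{4^n}$ to accumulated perturbation corrections propagating through adjacent cells, and $3\times 4^n + 3$ to the edge count of the grid) has no supporting argument and does not correspond to any mechanism in your construction. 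Likewise, injectivity of your skeleton map cannot be repaired by adjusting vertex values alone: any choice making the straight segments non-crossing will in general be far from $f$ in graph distance, so the ``finite combinatorial task'' you defer to is not merely intricate, it is the wrong task. The missing idea is the pruning/complexity-bounding of the image curves on the 1-skeleton; without it the lemma's uniform Lipschitz bound cannot be reached.
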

	\begin{proof}
	Subdivide $D^2$ into a uniform square grid of mesh width $2^{-n}$.
	We will replace $f$ on the edges of this grid by a map with small Lipschitz constant
	and extend this map radially to obtain the desired approximation.
	An edge of the grid is simply an edge of one of the squares of the subdivision.
	We view these edges as subsets of $D^2$.
	Hence if two squares meet at an edge, this edge is counted as one edge and not as two.
	An edge is called an \emph{interior edge} if it is not completely contained in the boundary of $D^2$.
	Otherwise it is called a \emph{boundary edge}.

	Consider an interior edge $e$ of this grid.
	The map $f$ sends $e$ to a simple curve $C$.
	We can isometrically identify $e$ with the interval $[0, 2^{-n}]$ and think
	of the curve $C$ as being parametrised over this interval by a continuous function
	$\gamma \colon [0, 2^{-n}] \to D^2$.

	Up to slightly perturbing $f$ we can assume that all
	curves $C$ of this form
	intersect the mesh in a non-degenerate manner
	in the sense that the intersection of $C$ with the edges of the mesh
	is zero-dimensional, \ie $\gamma$ never maps an interval into an edge,
	and that $C$ does not intersect any vertices of the mesh.

	By the assumption that $C$ intersects the mesh in a non-degenerate manner
	there exists a unique square $S_0$ such that the initial segment $\gamma([0, \delta))$
	of the curve is completely contained in $S_0$ for some $\delta > 0$.

	For a square $S$
	we say that $\gamma$ is \emph{staying at} $S$ in the interval $[a, b]$ if
	$\gamma(t) \notin S$ for $t < a$,
	$\gamma(a + \delta) \in S$, for all sufficiently small $\delta > 0$,
	$\gamma(b) \in S$
	and $\gamma(t) \notin S$ for all $t > b$.

	Note that the intervals $[a,b]$ in which $\gamma$ is staying at some square $S$
	form a forest $\mathcal{F}_\gamma$ (\ie a finite union of trees)
	with respect to the usual inclusion order.
	The interval $[0, 2^{-n}]$ decomposes into the maximal intervals of this order.

	If $\gamma$ is staying at $S$ in the interval $[a,b]$, we call a restriction
	$\gamma\left|_{[c,d]}\right.$ with $[c,d] \subseteq [a,b]$
	an \emph{arm} of $\gamma$ in $[a,b]$
	if $\gamma(c) \in S$ and $\gamma(d) \in S$,
	but
	$\gamma(t) \notin S$ for all $c < t < d$.
	We call the interval $[c,d]$ the \emph{domain} of the arm $A$.
	The arms of $\gamma$ in $[a,b]$ can be linearly ordered by comparing the left
	endpoints of their domains with respect to the usual order on $\R$.
	If $S'$ is another square and $A$ is an arm, we say that $A$ \emph{reaches} $S'$
	if it intersects $S'$.
	We say that an arm $A$ of $\gamma$ in $[a,b]$ \emph{reaches new squares} if there exists a square $S'$
	such that $A$ reaches $S'$ but no arm $A'$ of $\gamma$ in $[a,b]$ with $A' \leq A$ reaches $S'$.

	We replace $\gamma$ with a simpler curve that only has arms which reach new squares.
	Let $I$ be a maximal interval in the forest $\mathcal{F}_\gamma$
	and let $S$ be the square at which $\gamma$ is staying in $I$.
	Let $A$ be an arm of $\gamma$ in $I$ which does not reach any new squares.
	Let $[c,d]$ be the domain of $A$.
	Then there exist $c', d' \in I$ with
	$c' < c < d < d'$
	such that $A$ is the only arm of $\gamma$ in the interval
	$[c',d']$ and $\gamma(c')$ and $\gamma(d')$ are interior points of $S$.
	Hence, the arm $A$ can be pruned away by replacing $\gamma$ on $[c',d']$
	with the linear interpolation in $\gamma(c')$ and $\gamma(d')$.
	Use this method to prune away all arms in $I$ which do not reach new squares.
	Note that we can do this in such a way that the new curve does not intersect itself.
	Do the same for all maximal intervals in $\mathcal{F}_\gamma$ and
	apply this procedure recursively to all intervals on the next level,
	so that we eventually obtain a
	new curve $\zeta$ which does not	have any arms that do not reach new squares.

	Replace $\zeta$ with a suitable linear approximation $\eta$ which is constructed as follows:
	consider the intersection of $\zeta$ with the edges of our grid.
	Since $\zeta$ intersects the grid in a nondegenerate manner
	this is a finite set of points.
	Order them according to the order in which they are visited by $\zeta$.
	If the line segment between two consecutive points $\zeta(t_0)$ and $\zeta(t_1)$
	is contained in an edge of the grid,	add an additional point $\zeta(s)$ with
	$s \in (t_0, t_1)$ to the set.
	This point is necessarily contained in the interior of some square.
	Consider the polygonal chain $P$ which interpolates these points in the given order.
	This chain could have some self-intersections.
	Note however that if we have two segments in the chain whose endpoints lie
	on the edges of the grid, then the two segments intersect if and only if
	the original curve $\zeta$ has a self-intersection.
	Hence, the only self-intersections of $P$ can happen between segments where at least
	one endpoint is an interior point of a square.
	In this case, we can move this endpoint closer to the boundary of the square
	to resolve the intersection.
	Doing this finitely many times yields a simple polygonal chain which intersects
	the same squares as $\zeta$.
	Let $\eta \colon [0, 2^{-n}] \to D^2$
	be the parametrisation of this chain where each segment is traversed at the same speed.

	Let us now estimate the Lipschitz constant of the curve $\eta$.
	As there are $4^n$ squares in total,
	the forest $\mathcal{F}_{\eta}$ contains at most $4^n$ trees.
	By the same argument each tree has height at most $4^n$.
	Since $\eta$ intersects the mesh in a non-degenerate manner,
	each node in the tree has at most four children.
	Hence each tree has at most $4^{4^n}$ elements,
	and thus the forest has at most $4^n \times 4^{4^n}$ elements in total.
	By construction, the number of vertices used in the linear interpolation of $\zeta$
	in each element of the forest is bounded by
	\[3 \times \left(\text{ the number of arms}\right) + 3 .\]
	As every arm has to reach at least one new square, there are at most $4^n$ arms,
	so that every element of the forest contributes at most
	$3 \times 4^n + 3$ vertices.
	In total there are at most
	\[N = 4^n \times 4^{4^n} \times \left(3 \times 4^n + 3\right)\]
	vertices, and just as many line segments.
	Thus, the interval $[0, 2^{-n}]$ is divided into $N$ segments of length
	$2^{-n} / N$.
	Each line segment has diameter at most $2^{-n}$.
	Hence, the Lipschitz constant of $\eta$ is bounded by $N$.

	If $e$ is a boundary edge, then $f(e)$ is contained in the boundary of $[0,1]$.
	Hence, we can construct a piecewise linear function with the same image, which
	uses at most $5$ pieces.
	Its Lipschitz constant is therefore bounded by $5\cdot 2^n < N$.

	We have constructed a piecewise linear curve $\eta$ for every edge $e$ of the mesh.
	Similarly as in the construction of each individual curve, we can make sure
	that these curves do not intersect by potentially moving certain vertices
	in the interiors of the squares of the grid closer to the boundary
	(which does not change the estimate of the Lipschitz constant).
	Hence, these curves define a bijective map $\tilde{f}$ on the $1$-skeleton of the mesh.
	By Lemma \ref{Lemma: Lipschitz constant increase of radial extension}
	this map extends to a bijection with Lipschitz constant $N + 1$ via radial extension
	to the interiors of the squares.
	By construction, the curve $\eta$ intersects a square if and only if $\gamma$ does so.
	Hence, the image of every square $S$ under $\tilde{f}$ is $2^{-n}$-close in the Hausdorff distance to the
	image of $f$.
	Since every square has diameter at most $2^{-n}$, the graphs of $f$ and $\tilde{f}$ are
	$2^{-n}$ close in the Hausdorff distance.
	Hence, $\tilde{f}$ has all the desired properties.
	\end{proof}

	Finally, we observe that the compact space of $L$-Lipschitz functions is
	uniformly upper semicomputably compact in $L$ as a subset of $C(D^2,D^2)$.
	This is a special case of the constructive version of the Arzel\`a-Ascoli theorem, which was
	proved in \cite{BishopBridges}.

	\begin{Theorem}\label{Theorem: computable Arzela-Ascoli}
	The map
	\begin{align*}
	&\operatorname{mod} \colon \NN \to \K\left(C\left(D^2, D^2\right)\right), \\
		&\omega \mapsto \Set{f \colon D^2 \to D^2}{\omega \text{ is a modulus of continuity for } f}
	\end{align*}
	is computable.
	In particular, the map
	\begin{align*}
	&\operatorname{lip} \colon \R \to \K\left(C\left(D^2, D^2\right)\right), \\
	&L \mapsto \Set{f \colon D^2 \to D^2}{\forall x,y \in D^2. \left(|f(x) - f(y)| \leq L \cdot |x - y| \right)}
	\end{align*}
	is computable.
	\end{Theorem}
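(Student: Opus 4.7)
The plan is to exhibit, uniformly in $\omega$ and $n$, a finite set $F_{\omega,n} \subseteq C(D^2, D^2)$ of piecewise-linear maps that forms a $2^{-n}$-net for $\operatorname{mod}(\omega)$ in the uniform metric; upper semi-computability of $\operatorname{mod}(\omega) \in \K(C(D^2,D^2))$ --- which is what the statement amounts to --- then follows by a standard argument. Concretely, an open $U \subseteq C(D^2, D^2)$ is presented as a union of basic open balls $B(h_i, s_i)$ with $(h_i, s_i)$ computably enumerable, and one semi-decides $\operatorname{mod}(\omega) \subseteq U$ by searching for a finite subfamily and some $n \in \N$ such that every element of the finite set $F_{\omega,n}$ lies in some $B(h_{i_j}, s_{i_j} - 2^{-n})$. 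Membership in an open ball of computable centre and rational radius is semi-decidable for computable continuous functions, so the search is effective. Correctness is classical: if $\operatorname{mod}(\omega) \subseteq U$ then compactness of $\operatorname{mod}(\omega)$ (Arzel\`a-Ascoli) yields a finite subcover with strictly smaller radii, and the net property transfers this back.

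To build $F_{\omega,n}$, let $L_{\omega,n} \subseteq D^2$ be the dyadic grid of spacing $2^{-\omega(n+2)}$ and $M_n \subseteq D^2$ the dyadic grid of spacing $2^{-n-2}$. Call a function $g \colon L_{\omega,n} \to M_n$ \emph{admissible} if for every $k \in \{1, \ldots, n+2\}$ and every $x, y \in L_{\omega,n}$ with $|x - y| \leq 2^{-\omega(k)}$ one has $|g(x) - g(y)| \leq 2^{-k} + 2^{-n-1}$; the extra $2^{-n-1}$ absorbs the rounding error introduced by passing from $f|_{L_{\omega,n}}$ to a lattice-valued function. Only finitely many admissible $g$ exist, and they are enumerable uniformly in $\omega, n$. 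Extend each such $g$ to a piecewise-linear $\tilde g \colon D^2 \to D^2$ on a fixed triangulation of $D^2$ with vertex set $L_{\omega,n}$, and take $F_{\omega,n}$ to be the resulting finite set of extensions. For any $f \in \operatorname{mod}(\omega)$, rounding $f|_{L_{\omega,n}}$ into $M_n$ produces an admissible $g$, and a routine estimate --- the modulus of $f$ controls oscillation on each simplex, while $\tilde g$ is a convex combination of values within $2^{-n-2}$ of the vertex values of $f$ --- yields $\|f - \tilde g\|_\infty \leq 2^{-n}$.

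The Lipschitz case reduces to the modulus case. From $L > 0$ (given as a real, in particular as an upper real) one computes some $N \in \N$ with $2^N \geq L$ and sets $\omega_L(n) := n + N$; every $L$-Lipschitz function then has modulus $\omega_L$. The set $\operatorname{Lip}(L) \subseteq C(D^2, D^2)$ of $L$-Lipschitz functions is moreover an upper semi-computable closed set: its complement is semi-decidable because the existence of $x, y \in D^2$ with $|f(x) - f(y)| > L \cdot |x-y|$ can be semi-decided by enumerating dyadic pairs. Hence $\operatorname{lip}(L) = \operatorname{mod}(\omega_L) \cap \operatorname{Lip}(L)$, and computability of the intersection $\K \times \A \to \K$ gives $\operatorname{lip}(L)$ as a uniformly computable compact. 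The only real technical issue is calibrating the two lattice spacings and the slack $2^{-n-1}$ in the admissibility condition so that rounding carries $\operatorname{mod}(\omega)$ into the admissible lattice functions, the piecewise-linear extension lies within $2^{-n}$ of the original, and admissibility remains a decidable finite check depending only on $\omega$ restricted to a short initial segment; once these constants are fixed, the remaining verifications are routine bookkeeping.
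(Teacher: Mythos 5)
Your reduction of the problem to exhibiting, uniformly in $\omega$ and $n$, a finite family $F_{\omega,n}$ of piecewise-linear maps is a reasonable strategy (the paper itself gives no direct proof, deferring to the constructive Arzel\`a--Ascoli theorem of Bishop--Bridges), and the soundness half of your semi-decision procedure is fine: if every element of $F_{\omega,n}$ lies in a ball shrunk by $2^{-n}$ and every $f \in \operatorname{mod}(\omega)$ is $2^{-n}$-close to some element of $F_{\omega,n}$, then indeed $\operatorname{mod}(\omega) \subseteq U$. The gap is in the completeness half, which you dismiss with ``the net property transfers this back''. Your net property is one-sided: you prove that every $f$ with modulus $\omega$ is close to some admissible extension $\tilde g$, but you never prove that every admissible extension $\tilde g$ is close to $\operatorname{mod}(\omega)$. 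Your procedure demands that \emph{all} elements of $F_{\omega,n}$ lie inside the shrunken balls, so a single ``spurious'' admissible $g$ --- one satisfying the relaxed lattice inequalities with slack $2^{-n-1}$ but whose extension stays at distance bounded away from $\operatorname{mod}(\omega)$ as $n \to \infty$ --- would make the search fail to halt on, say, $U$ equal to a small open thickening of $\operatorname{mod}(\omega)$, even though $\operatorname{mod}(\omega) \subseteq U$. Hence upper semicomputability in $\K(C(D^2,D^2))$ is not established without a lemma of the form: the maximal distance from elements of $F_{\omega,n}$ to $\operatorname{mod}(\omega)$ tends to $0$ (uniformly over the admissible $g$, classically suffices).

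This missing lemma is not routine bookkeeping, and it is exactly the point where the real work of an effective Arzel\`a--Ascoli argument lies. Your admissibility condition only constrains $g$ at scales $2^{-\omega(k)}$ for $k \leq n+2$, with an additive slack $2^{-n-1}$ that is of the \emph{same order} as the finest bound $2^{-n-2}$ being enforced; so the obvious repairs fail quantitatively: shrinking $\tilde g$ towards a constant to absorb the slack requires a contraction factor bounded away from $1$ (about $2/3$ at $k = n+2$), moving the function by a constant amount, and inf-convolution with the step modulus fails because the step modulus is not subadditive, while chaining the approximate inequalities multiplies the slack by the number of chain steps, which can grow like $2^{\omega(n+2)}$. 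You either need to prove directly that approximate-$\omega$-continuity with these parameters forces $O(2^{-n})$ (or at least $o(1)$) proximity to $\operatorname{mod}(\omega)$, or redesign $F_{\omega,n}$ so that its members are verifiably close to (ideally members of) $\operatorname{mod}(\omega)$ while keeping the family enumerable and the covering property --- which is what the Bishop--Bridges proof the paper cites takes care of. A secondary, minor point: as literally defined with strict inequalities, $\operatorname{mod}(\omega)$ is not closed (consider $(1-1/j)\operatorname{id}$ with $\omega = \operatorname{id}$), so your appeals to its compactness should be phrased for its closure, i.e.\ the non-strict version; the same care is needed when you argue $\operatorname{Lip}(L) \subseteq \operatorname{mod}(\omega_L)$, where you should take $2^N > L$ strictly. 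The Lipschitz half of your argument (intersecting with the co-c.e.\ closed set $\operatorname{Lip}(L)$ and using computability of intersection $\K \times \A \to \K$) is correct, conditional on the first part.
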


\section{Computability of automorphisms}

	The goal of this section is to establish the following result:

	\begin{Theorem}\label{Theorem: automorphisms located}
		The closure $\clos{\Aut'(D^2)}$ of the set of orientation-preserving
		automorphisms of the unit square $D^2$	is computable as a closed subset of
		the space $C(D^2,D^2)$ of continuous self-maps of $D^2$.
	\end{Theorem}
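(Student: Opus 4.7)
The plan is to establish computability of $\clos{\Aut'(D^2)}$ by treating lower and upper semicomputability separately.

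For lower semicomputability, I would exhibit a computable dense sequence by enumerating the countable family of orientation-preserving piecewise-affine homeomorphisms of $D^2$ with vertices in $\Q^2$. Density in $\clos{\Aut'(D^2)}$ is standard, and can also be extracted directly from Lemma \ref{Lemma: Lipschitz approximation} by rationalising its output without destroying bijectivity. This yields that $\clos{\Aut'(D^2)}$ is a computable point of $\V(C(D^2, D^2))$.

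For upper semicomputability, the plan is to semi-decide the complement using the Brouwer mapping degree. Concretely, I claim $\varphi \in \clos{\Aut'(D^2)}$ if and only if
\begin{itemize}
\item[(a)] $\varphi(\partial D^2) \subseteq \partial D^2$;
\item[(b)] $\deg(\varphi, U, y) \in \{0, 1\}$ for every open $U \subseteq D^2$ and every $y \notin \varphi(\partial U)$; and
\item[(c)] $\deg(\varphi, \intr{D^2}, y) = 1$ for every $y \in \intr{D^2} \setminus \varphi(\partial D^2)$.
\end{itemize}
The forward direction is routine: (a) is preserved under uniform limits since $\partial D^2$ is closed, and (b)--(c) are preserved by homotopy invariance of the degree, applied to the linear homotopy between $\varphi$ and any sufficiently close approximating automorphism (which avoids a fixed $y \notin \varphi(\partial U)$ on $\partial U$ once the approximants are close enough, and which for a homeomorphism always contributes $0$ or $+1$ depending on whether the unique preimage lies in $U$). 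The reverse direction identifies $\clos{\Aut'(D^2)}$ with the monotone degree-one self-maps of the disk preserving the boundary, and is a classical topological result along the lines of Youngs' approximation theorem. \emph{I expect this to be the main obstacle}; the plan is either to invoke such a theorem or to reconstruct it constructively, by refining a fine grid and replacing $\varphi$ cell by cell with a PL homeomorphism whose local degree structure matches that of $\varphi$.

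Granted this characterisation, the semi-decision procedure for $C(D^2, D^2) \setminus \clos{\Aut'(D^2)}$ runs three searches in parallel. For $\neg$(a), note that $\varphi(\partial D^2)$ is uniformly computable as a compact subset of $D^2$ and $\intr{D^2}$ is an effective open set, so $\varphi(\partial D^2) \cap \intr{D^2} \neq \emptyset$ is semi-decidable. For $\neg$(b), enumerate pairs $(U, y)$ consisting of a rational open rectangle $U \subseteq D^2$ and a rational point $y$, semi-decide $y \notin \varphi(\partial U)$ (using that $\varphi(\partial U)$ is a computable compact), and on success compute $\deg(\varphi, U, y)$ via Theorem \ref{Theorem: Computability of the mapping degree}; halt as soon as an integer outside $\{0, 1\}$ is witnessed. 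For $\neg$(c), do likewise with $U = \intr{D^2}$, halting when a degree other than $1$ appears. Each rational rectangle and $\intr{D^2}$ itself is a computable element of $\mathcal{U}(\R^2)$ through its two-sided distance function, and $\varphi$ admits a computable extension to $\R^2$ via precomposition with the radial retraction onto $D^2$, so the hypotheses of Theorem \ref{Theorem: Computability of the mapping degree} are satisfied.
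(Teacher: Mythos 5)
Your lower semicomputability argument (enumerating rational PL orientation-preserving homeomorphisms) is exactly what the paper does, and is fine. The problem is the upper semicomputability half: the characterisation (a)--(c) is genuinely weaker than membership in $\clos{\Aut'(D^2)}$, so your semi-decision procedure for the complement fails to halt on maps it ought to reject. Concretely, fix $x_1 \in \intr{D^2}$, a small $r>0$, and a point $y_0 \in \intr{D^2}$ with $|y_0 - x_1| \gg r$. Let $f$ be the identity outside the disk $B=\Set{x}{|x-x_1|\leq r}$; on the annulus $r/2 \leq \rho \leq r$ (polar coordinates about $x_1$) let $f(x_1+\rho e^{i\theta}) = x_1 + (2\rho - r)e^{i\theta}$, which covers $B$ once and orientation-preservingly; and on the inner disk $\rho \leq r/2$ let $f(x_1+\rho e^{i\theta}) = x_1 + (1-2\rho/r)(y_0 - x_1)$, collapsing it onto the straight segment from $x_1$ to $y_0$. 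This $f$ is continuous, surjective, and fixes $\partial D^2$ pointwise, so (a) and (c) hold; moreover every point has at most one preimage component carrying local degree $+1$, all further preimage components having image inside the segment and hence contributing degree $0$ to any admissible $(U,y)$, so (b) holds: all degrees lie in $\{0,1\}$. Yet $f \notin \clos{\Aut'(D^2)}$: for instance $f^{-1}(y_0) = \{y_0, x_1\}$ is disconnected, and indeed any homeomorphism uniformly close to $f$ is close to the identity on the circle $|x-x_1| = r$ and therefore maps $x_1$ near $B$, not near $y_0$. Your appeal to a Youngs-type theorem does not close this gap, because that theorem characterises near-homeomorphisms via monotonicity (connected point preimages), and conditions (b)--(c) simply do not imply monotonicity: degree-zero ``parasitic'' preimage pieces are invisible to them.

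This is precisely where the paper's Definition \ref{Definition: Pseudo-automorphism} is stronger: condition 4 demands that for every finite disjoint family $U_1,\dots,U_n$ with $y \in \tilde{f}(\clos{\bigcup U_i}) \setminus \tilde{f}(\partial \bigcup U_i)$ the degrees sum to exactly $1$; already the case $n=1$ forces the degree to be $1$ (never $0$) on any open set whose closure maps over $y$, which rules out the collapsed inner disk above (a small disk $U$ around $x_1$ has $y_0 \in \tilde f(\clos{U})\setminus \tilde f(\partial U)$ but $\deg(\tilde f, U, y_0)=0$). That condition is what drives Lemma \ref{Lemma: set of solutions connected and simply connected} (preimages of connected, resp.\ simply connected, sets are connected, resp.\ simply connected), and the hard inclusion $\PsAut^2 \subseteq \clos{\Aut'(D^2)}$ is then proved by an explicit mesh construction (Lemmas \ref{Lemma: preimage preserves topology} and \ref{Lemma: approximating pseudo-auto}) rather than by citing a classical approximation theorem. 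A secondary, fixable point: even with a correct family of degree conditions, violations need to be witnessed by effectively enumerable sets, so you should search over finite unions of rational boxes (using additivity), not single rational rectangles, and you must be careful that your extension of $\varphi$ (retraction versus the paper's radial extension) does not distort degrees for sets meeting $\partial D^2$.
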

	\begin{proof}
		This follows from Theorems \ref{Theorem: pseudo-aut located} and \ref{Theorem: pseudo-aut = clos(aut)} below.
	\end{proof}

	Let $f \colon S^1 \to S^1$ be a self-map of the unit circle.
	Then $f$ is called an \emph{orientation-preserving pseudo-automorphism}
	(or just pseudo-\-au\-to\-mor\-phism for short)
	if $f$ lifts to a surjective monotonically increasing map
	$\tilde{f} \colon [0,1] \to [0,1]$
	with respect to suitable orientation-preserving
	bijective parametrisations of its domain and codomain.
	Note that being a pseudo-\-au\-to\-mor\-phi\-sm is computably falsifiable,
	\ie the pseudo-\-au\-to\-mor\-phi\-sms are an upper
	semicomputable closed subset of $C(S^1,S^1)$.
	It is easy to see that they are also lower semicomputable.

	\begin{Definition}\label{Definition: Pseudo-automorphism}
	For $f \colon D^2 \to D^2$, let $\tilde{f} \colon \R^2 \to \R^2$ denote the radial extension
	\[\tilde{f}(r \cdot x) = r\cdot f(x), \; \text{ where } x \in \partial D^2, \; r \in [1, +\infty). \]
	Call $f \colon D^2 \to D^2$ an orientation preserving pseudo-automorphism
	(or just pseudo-automorphism)
	if it satisfies the following conditions:
	\begin{enumerate}
	\item
		$f(\partial D^2) \subseteq \partial D^2$.
	\item
		$f|_{\partial D^2} \colon S^1 \to S^1$ is a pseudo-automorphism of the unit circle.
	\item
		$f$ is surjective.
	\item
		If $U_1, \dots, U_n$ are disjoint open subsets of $\R^2$ and
		$y \in \tilde{f}(\clos{\bigcup U_i}) \setminus \tilde{f}(\partial \bigcup U_i)$
		then
		$\sum_{i = 1}^n \deg(\tilde{f},U_i,y) = 1$.
	\end{enumerate}
	\end{Definition}

	Let $\PsAut^2$ denote the set of all (orientation preserving) pseudo-\-au\-to\-mor\-phisms of $D^2$.

	\begin{Theorem}\label{Theorem: pseudo-aut located}
		The set $\PsAut^2$ is computable as a closed subset of the space $C(D^2,D^2)$.
	\end{Theorem}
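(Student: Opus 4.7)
The plan is to prove $\PsAut^2$ is computable as a closed subset by showing it is both lower and upper semi-computable.

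\textbf{Lower semi-computability.} I would enumerate a computable dense sequence in $\PsAut^2$. Any effective listing of orientation-preserving piecewise-linear automorphisms of $D^2$ with rational vertices does the job: each such PL homeomorphism is trivially a pseudo-automorphism (the four conditions of Definition \ref{Definition: Pseudo-automorphism} are immediate), and by Theorem \ref{Theorem: pseudo-aut = clos(aut)} below they are dense in $\PsAut^2$, giving a computable point of $\V(C(D^2,D^2))$.

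\textbf{Upper semi-computability.} I would semi-decide failure of each of the four defining conditions. Condition 1 fails iff the computable compact set $f(\partial D^2)$ meets the c.e.\ open set $\intr{D^2}$, which is semi-decidable. Condition 3 fails iff the computable compact $f(D^2)$ misses some rational ball inside $D^2$, again semi-decidable. For Condition 2 I would compose with the computable restriction $f \mapsto f|_{\partial D^2}$ and invoke the upper semi-computability of the $S^1$-pseudo-automorphisms in $C(S^1,S^1)$ noted just after the circle definition. The main work lies in Condition 4: enumerate rational open rectangles $U \subseteq \R^2$ and rational points $y \in \R^2$; semi-decide $y \notin \tilde f(\partial U)$ using computability of the compact set $\tilde f(\partial U)$; once confirmed, compute $\deg(\tilde f, U, y) \in \Z$ via Theorem \ref{Theorem: Computability of the mapping degree}; and output failure whenever the result lies outside $\{0, 1\}$. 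The witness is sound because a non-zero mapping degree forces the existence of a preimage of $y$ in $U$, so $y \in \tilde f(U) \subseteq \tilde f(\clos U) \setminus \tilde f(\partial U)$, and together with $\deg \neq 1$ this certifies a genuine violation of Condition 4.

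\textbf{Main obstacle.} The difficult step will be proving completeness of the Condition-4 procedure: that every $f \notin \PsAut^2$ eventually produces a rational witness with $\deg(\tilde f, U, y) \notin \{0, 1\}$. The awkward case is a failure with $\deg(\tilde f, U_0, y_0) = 0$ while $y_0 \in \tilde f(U_0)$ -- a ``hidden preimage'' which the search above does not directly catch, because membership in the image is only co-semi-decidable. Under conditions 1--3, however, the boundary restriction $f|_{\partial D^2}$ is a degree-$1$ $S^1$-pseudo-automorphism, which pins the global mapping degree of $\tilde f$ over any sufficiently large rational ball around an interior $y_0$ to exactly $1$. A hidden preimage in $U_0$ with zero local degree must therefore be compensated, by the additivity axiom, by preimages outside $\clos U_0$ whose local-degree contributions total~$1$. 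A careful subdivision argument against this pinned global total should expose some rational sub-rectangle on which $\tilde f$ carries degree $\geq 2$ (a concentrated local degree at an isolated preimage) or $\leq -1$ (a cancelling contribution near a degenerate preimage). Producing this witness rigorously, against arbitrary continuous $f$, is the substantive topological content of the proof.
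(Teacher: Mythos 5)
Your overall skeleton is the same as the paper's: lower semicomputability via an effective dense listing of rational piecewise-linear automorphisms (dense by Theorem \ref{Theorem: pseudo-aut = clos(aut)}), and upper semicomputability by semi-deciding the failure of each condition of Definition \ref{Definition: Pseudo-automorphism}, with Theorem \ref{Theorem: Computability of the mapping degree} doing the work for condition 4; the paper's own proof is essentially a two-sentence assertion of exactly this. Your treatment of conditions 1--3 and of the lower bound is fine. The problem is the step you yourself flag: your falsification procedure for condition 4 only fires when some computed degree (or sum of degrees) lies outside $\{0,1\}$, and your proposed completeness argument does not close the gap. The ``careful subdivision argument'' cannot work at the given point $y$: take $f$ with an isolated fold, so that $\tilde{f}^{-1}(y)$ consists of one component of local degree $0$ inside $U_0$ and one regular preimage of local degree $1$ elsewhere. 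This violates condition 4 (apply it with $n=1$ to $U_0$), yet for this $y$ every admissible open set and every finite disjoint family has degree sum in $\{0,1\}$, so no amount of subdividing against the pinned global degree $1$ produces a witness at $y$; any degree-theoretic witness (here a $-1$ near the fold) only appears at \emph{nearby} values $y'$, and proving that such a nearby visible witness always exists for arbitrary continuous $f$ (where preimage components may be continua and degrees may vanish on open sets of values) is exactly the unproved ``substantive topological content'' you acknowledge. As it stands, the upper-semicomputability half is therefore not established.

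The gap is avoidable, however, by changing the witness format rather than by strengthening the topology: do not try to certify $y \in \tilde{f}(\clos{\bigcup V_i})$ at the exact point where you evaluate the degree. Enumerate finite families of pairwise disjoint rational open sets $V_1,\dots,V_n$ (finite unions of rational squares, not single rectangles --- single rectangles may also miss violations consisting of two disjoint degree-$1$ covers) together with a rational ball $B$; semi-decide that $B \cap \tilde{f}(\partial \bigcup V_i) = \emptyset$, semi-decide that $B \cap \tilde{f}(\clos{\bigcup V_i}) \neq \emptyset$ --- this is possible because $\clos{\bigcup V_i}$ is a computable compact (and overt) set, hence so is its image under $\tilde{f}$, and ``computable compact set meets rational open ball'' is semi-decidable --- and then compute $\sum_i \deg(\tilde{f}, V_i, y_B)$ at the centre $y_B$ of $B$ and accept if it differs from $1$. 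Soundness holds because the degree is constant on $B$ (which avoids $\tilde{f}(\partial V_i)$), so any point of $B$ lying in $\tilde{f}(\clos{\bigcup V_i})$ is a genuine counterexample to condition 4, even when the degree sum is $0$. Completeness holds because any violating family $(U_1,\dots,U_n,y)$ can be shrunk by excision to rational cells $V_i$ with $\tilde{f}^{-1}(y)\cap U_i \subseteq V_i \subseteq \clos{V_i} \subseteq U_i$, preserving all degrees, after which a small rational ball around $y$ misses $\tilde{f}(\partial\bigcup V_i)$ while meeting $\tilde{f}(\clos{\bigcup V_i})$ at $y$ itself. With this witness format the ``hidden preimage'' case disappears and no degree outside $\{0,1\}$ is ever needed; this is presumably what the paper's terse ``follows from the definition'' is meant to cover.
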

	\begin{proof}
		This essentially follows from the definition.
		On the one hand, all the conditions in Definition \ref{Definition: Pseudo-automorphism} define upper semicomputable closed sets.
		Thus their intersection is again upper semicomputable.
		On the other hand,
		piecewise linear automorphisms
		which are given by matrices with rational entries
		on a rational polygonal subdivision of
		$D^2$ are dense in the set of all automorphisms,
		which are in turn dense in the set of all pseudo-automorphisms by
		Theorem \ref{Theorem: pseudo-aut = clos(aut)} below.
		As we can semi-decide for a given piecewise linear map which is specified
		by rational data if it is bijective, the set $\PsAut^2$ admits a computably
		enumerable dense sequence and	therefore is lower semicomputable.
	\end{proof}

	\begin{Theorem}\label{Theorem: pseudo-aut = clos(aut)}
		We have $\PsAut^2 = \clos{\Aut'(D^2)}$.
	\end{Theorem}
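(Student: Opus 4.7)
My plan is to prove the two inclusions separately. The inclusion $\clos{\Aut'(D^2)} \subseteq \PsAut^2$ is routine: since $\PsAut^2$ is closed by Theorem \ref{Theorem: pseudo-aut located}, I only need to verify that every $f \in \Aut'(D^2)$ is a pseudo-automorphism. Conditions (1)--(3) of Definition \ref{Definition: Pseudo-automorphism} are immediate, and for condition (4), the radial extension $\tilde{f}$ of an orientation-preserving automorphism of $D^2$ is an orientation-preserving homeomorphism of $\R^2$, so the unique preimage of $y$ lies in exactly one of the $U_i$, say $U_{i_0}$, giving $\deg(\tilde{f}, U_{i_0}, y) = 1$ and $\deg(\tilde{f}, U_j, y) = 0$ for $j \neq i_0$, so the sum equals $1$.

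For the harder inclusion $\PsAut^2 \subseteq \clos{\Aut'(D^2)}$, the structural cornerstone of my approach is to deduce from condition (4) that every pseudo-automorphism $f$ is \emph{monotone}, i.e.\ $f^{-1}(y)$ is connected for every $y \in D^2$. If $f^{-1}(y) = C_1 \sqcup C_2$ were disconnected into two closed nonempty pieces, I would choose by compactness disjoint open neighborhoods $U_i \supset C_i$ with disjoint closures and with each $\partial U_i$ missing $f^{-1}(y)$; then condition (4) applied to each singleton $\{U_i\}$ individually would force $\deg(\tilde{f}, U_i, y) = 1$, while the same condition applied to the pair $\{U_1, U_2\}$ jointly would force $\deg(\tilde{f}, U_1, y) + \deg(\tilde{f}, U_2, y) = 1$, contradicting $1 + 1 = 1$. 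Monotonicity, together with surjectivity, boundary-preservation, and the monotone orientation-preserving boundary behavior given by condition (2), places $f$ in the classical class of monotone surjections of the disk known to be uniform limits of orientation-preserving homeomorphisms (a Youngs-type theorem).

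To make this approximation effective, given $f \in \PsAut^2$ and $\varepsilon > 0$, I would subdivide $D^2$ into a sufficiently fine square grid, use monotonicity of $f$ to identify, for each image grid cell $T$, the unique connected component of $f^{-1}(T)$ carrying its degree, and then define an approximating $g \in \Aut'(D^2)$ by first straightening $f$ on the grid's $1$-skeleton to a simple piecewise linear homeomorphism (in the spirit of the pruning technique of Lemma \ref{Lemma: Lipschitz approximation}) and then radially extending across each cell. The hard part will be guaranteeing that these local modifications assemble into a globally bijective and orientation-preserving map: this requires that the combinatorial incidence structure of the preimages of grid edges faithfully matches that of the grid itself, which is precisely what the full strength of the degree condition (4), applied along every subcollection of grid cells, is designed to guarantee.
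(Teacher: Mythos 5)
Your easy inclusion and your derivation of monotonicity from condition (4) are both correct; the latter is exactly the point case of the paper's key Lemma \ref{Lemma: set of solutions connected and simply connected} (the paper proves connectivity of preimages of \emph{all} connected sets, which your point-monotonicity also yields by a standard compactness argument, and in addition proves simple connectivity of preimages of simply connected sets). Where you diverge is in how the approximation is completed. The paper quotes no classical result: it fixes a grid in the \emph{codomain}, picks a preimage point $x(v)$ for each grid vertex, joins these points by pairwise non-crossing arcs inside $f^{-1}$ of $h/4$-thickenings of the grid edges (this is where connectivity of preimages of connected sets is needed, not just of points), shows in Lemma \ref{Lemma: preimage preserves topology} that the resulting closed regions meet exactly as the grid squares do, and then defines the approximating automorphism square by square via radial extension (Lemma \ref{Lemma: approximating pseudo-auto}). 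Your alternative of invoking a Youngs/Moore-type theorem (monotone surjections of the disk onto itself, boundary to boundary, with a monotone degree-one boundary map, are uniform limits of homeomorphisms) is a legitimate shortcut in principle: non-separation (cell-likeness) of fibres is automatic for a monotone surjection of the disk, since a separating fibre would disconnect $D^2$ minus its image point, which is impossible; and orientation-preservation of sufficiently close approximating homeomorphisms follows from condition (2), because their boundary restrictions are uniformly close, hence homotopic, to $f|_{\partial D^2}$ and so of degree one. Spelled out, this would give a genuinely different and shorter proof, at the price of importing decomposition-space machinery instead of the paper's self-contained construction.

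As written, however, your proposal stops short of a proof. The classical theorem is invoked only as ``a Youngs-type theorem,'' with no precise statement, and you verify none of its hypotheses beyond monotonicity (neither the non-separation of fibres nor the orientation issue is addressed). Your backup constructive argument leaves open exactly the step that constitutes the bulk of the paper's proof: the verification that the preimage arcs and regions assemble with the same incidence structure as the grid and glue to a globally defined bijection. Asserting that condition (4) ``is designed to guarantee'' this is not an argument; in the paper it is the content of Lemma \ref{Lemma: preimage preserves topology} and Lemma \ref{Lemma: approximating pseudo-auto}, proved from the connectivity lemma (for instance, $\clos{\Omega(S_0)}$ cannot contain $\clos{\Omega(S_1)}$ because it misses $f^{-1}$ of the centre of $S_1$). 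So either commit to the classical citation, state it exactly, and check its hypotheses, or supply the gluing verification; at present the decisive part of the hard inclusion is missing.
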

	\begin{proof}
		Clearly every automorphism is a pseudo-automorphism.
		The proof of the converse takes up the rest of this section.
		It follows from Lemma \ref{Lemma: approximating pseudo-auto} below.
	\end{proof}

	We now prove the remaining direction of Theorem \ref{Theorem: pseudo-aut = clos(aut)}.
	Given a pseudo-automorphism $f \colon D^2 \to D^2$ and a number $\varepsilon > 0$
	we construct an automorphism $f_{\varepsilon}$ which is $\varepsilon$-close
	to it.
	The key observation is the following:

	\begin{Lemma}\label{Lemma: set of solutions connected and simply connected}
		Let $f \colon D^2 \to D^2$ be a pseudo-automorphism.
		Then the preimage under $f$ of every connected set is connected
		and the preimage under $f$ of every simply connected set is simply connected.
	\end{Lemma}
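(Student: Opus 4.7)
The plan is to exploit the $n = 1$ case of condition~(4) of Definition~\ref{Definition: Pseudo-automorphism}, which reads: for every bounded open $U \subseteq \R^2$ and every $y \in \tilde{f}(\clos{U}) \setminus \tilde{f}(\partial U)$, we have $\deg(\tilde{f}, U, y) = 1$. This is a strong ``pseudo-injectivity'' statement that, together with additivity of the degree, will furnish contradictions in both parts. It suffices to treat the case of open $C$: any closed connected (resp.\ simply connected) $C$ is a decreasing intersection of open connected (resp.\ simply connected) neighbourhoods in $D^2$, and a decreasing intersection of closed connected subsets of the compact Hausdorff space $D^2$ is again connected, with simple connectedness passing through similarly.

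For connectedness, suppose $C$ is open and connected but $f^{-1}(C) = A \sqcup B$ with $A, B$ disjoint nonempty open sets. Disjoint open subsets of $\R^2$ are separated, so $\partial A \cap B = \partial B \cap A = \emptyset$; hence every $x \in \partial A$ lies outside $f^{-1}(C)$ while being a limit of points of $A$, which forces $f(x) \in \clos{C} \setminus C = \partial C$, and similarly for $\partial B$. For any $y \in C$ we then have $y \notin f(\partial A) \cup f(\partial B)$, so both $\deg(\tilde{f}, A, y)$ and $\deg(\tilde{f}, B, y)$ are defined and locally constant on $C$. Picking $y_0 \in f(A)$ (nonempty since $A$ is), the $n = 1$ case of condition~(4) with $U = A$ forces $\deg(\tilde{f}, A, y_0) = 1$; local constancy and connectedness of $C$ then give $\deg(\tilde{f}, A, y) = 1$ for all $y \in C$, and the same argument yields $\deg(\tilde{f}, B, y) = 1$ on $C$. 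Additivity of the degree now produces $\deg(\tilde{f}, A \cup B, y) = 2$ for $y \in C$, contradicting the $n = 1$ case of condition~(4) applied to $U = A \cup B$.

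For simple connectedness, assume $C$ is open and simply connected; by the previous part $f^{-1}(C)$ is connected. If it failed to be simply connected there would be a simple closed curve $\gamma \subseteq f^{-1}(C)$ and a point $w$ in its Jordan interior $D_\gamma$ with $w \notin f^{-1}(C)$. Since $D^2$ is convex we have $D_\gamma \subseteq D^2$, so $\tilde{f}$ agrees with $f$ on $\clos{D_\gamma}$. Set $y = f(w) \in D^2 \setminus C$: then $y \in f(D_\gamma) \subseteq \tilde{f}(\clos{D_\gamma})$ and $y \notin f(\gamma) \subseteq C$, so condition~(4) with $U = D_\gamma$ gives $\deg(\tilde{f}, D_\gamma, y) = 1$. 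On the other hand, simple connectedness of $C$ lets us extend the loop $f|_\gamma$ to a continuous map $h \colon \clos{D_\gamma} \to C$; since $\tilde{f}$ and $h$ coincide on $\partial D_\gamma = \gamma$, the linear homotopy $(t, x) \mapsto (1 - t) \tilde{f}(x) + t h(x)$ fixes the boundary, so the homotopy axiom of the degree yields $\deg(\tilde{f}, D_\gamma, y) = \deg(h, D_\gamma, y) = 0$, the last equality because $y \notin C \supseteq h(\clos{D_\gamma})$. This gives $1 = 0$, the desired contradiction. The hard part is recognising that the $n = 1$ specialisation of condition~(4) already encodes enough rigidity to drive both arguments; once this is grasped, the boundary chases and the null-homotopy extension are entirely classical, and the two parts become parallel applications of the same template.
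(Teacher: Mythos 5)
Your two degree-theoretic arguments are essentially the paper's own: connectedness by splitting the preimage and playing condition (4) against additivity of the degree, and simple connectedness by playing the value $1$ from condition (4) against the value $0$ coming from the null-homotopy of the boundary loop in $C$ (the paper computes this $0$ via winding numbers, you via an explicit extension $h$ and homotopy invariance; this is the same computation). The genuine gap is your preliminary reduction to open $C$. The lemma concerns arbitrary connected and simply connected sets, and you only treat open and closed ones; moreover both halves of the reduction fail for simple connectedness. A compact simply connected subset of $D^2$ need not be an intersection of open simply connected neighbourhoods: the Warsaw circle is path-connected with trivial fundamental group, yet any open simply connected set containing it must also contain its bounded complementary component (otherwise a Jordan curve in the open set encircling a missed interior point would not be null-homotopic there), so no family of open simply connected sets can intersect down to it. And simple connectedness does not pass to decreasing intersections: a nested sequence of closed topological discs can intersect precisely to the closed topologist's sine curve, which is not even path-connected. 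The reduction is also unnecessary: as in the paper, the two pieces of a disconnection of $f^{-1}(C)$ are separated sets and can therefore be enclosed in disjoint open subsets of $\R^2$, after which your degree computation goes through for arbitrary connected $C$; and your simple-connectedness argument never uses openness of $C$ itself, only the step (implicit in the paper as well) that a failure of simple connectedness of $f^{-1}(C)$ is witnessed by a Jordan curve with an interior point outside the set.

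A smaller, repairable slip: in the connectedness argument $A$ and $B$ are open only relative to $D^2$. If they meet $\partial D^2$ they are not open in $\R^2$, so $\deg(\tilde f, A, y)$ is not defined as written, and the claim $f(\partial A) \cap C = \emptyset$ fails at points of $A \cap \partial D^2$, which then lie in the boundary of $A$ taken in $\R^2$. This is exactly what the radial extension $\tilde f$ and the separation by open subsets of $\R^2$ in the paper's proof are there to handle; replacing $A$ and $B$ by disjoint open subsets of $\R^2$ containing them (or attaching a small radial collar over $A \cap \partial D^2$) repairs the step, but as written it has a hole.
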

	\begin{proof}
		Let $C$ be connected.
		Let $A_1 \cap A_2 \supseteq C$ be a partition of $f^{-1}(C)$ into
		non-empty closed subsets $A_1$ and $A_2$.
		Then $f(A_1) \cup f(A_2)$ is a partition of $C$ into non-empty closed subsets.
		Since $C$ is connected, there exists $y \in f(A_1) \cap f(A_2)$.
		If $A_1$ and $A_2$ are disjoint, they can be separated by open neighbourhoods
		$U_1 \supseteq A_1$ and $U_2 \supseteq A_2$ in $\R^2$.
		Recall that $\tilde{f}$ denotes the radial extension of $f$ to $\R^2$.
		Since $f$ is a pseudo-automorphism we have
		$\deg(\tilde{f}, U_i, y) = 1$ for $i = 1,2$
		and $\deg(\tilde{f}, U_1, y) + \deg(\tilde{f},U_2,y) = 1$.
		Contradiction!
		It follows that $A_1 \cap A_2 \neq \emptyset$, and so $f^{-1}(C)$ is
		connected.

		Now, assume that $C$ is simply connected.
		Let $L$ be a loop in $f^{-1}(C)$.
		If $L$ cannot be deformed into a point, then there exists
		$x$ in the region $U$ which is bounded by $L$ with $f(x) \notin C$.
		In particular the degree $\deg(f,U,f(x))$ is well defined.
		As $f \circ L$ is a closed curve in $C$ and $C$ is simply connected it
		follows from $f(x) \notin C$ that $f(x)$ is not contained in any
		region that is bounded by $f \circ L$.
		This implies $\deg(f, U, f(x)) = 0$.
		On the other hand, since $f$ is a pseudo-automorphism we have
		$\deg(f,U,f(x)) = 1$.
		Contradiction!
		It follows that $f^{-1}(C)$ is simply connected.
	\end{proof}

	We now construct $f_{\varepsilon}$ based on
	Lemma \ref{Lemma: set of solutions connected and simply connected}.
	We divide the codomain $D^2$ into a uniform square mesh $\mathcal{M}$ of
	mesh width $h < \varepsilon/2$.
	Reintroducing some notation from the proof of Lemma \ref{Lemma: Lipschitz approximation},
	call a vertex of the mesh an \emph{interior vertex} if it is contained in the
	interior of $D^2$.
	Call it a \emph{boundary vertex} otherwise.
	Call an edge an \emph{interior edge}, if at least one of its endpoints is
	an interior vertex.
	Call it a \emph{boundary edge} otherwise.
	Assign to each interior vertex $v$ an arbitrarily chosen point
	$x(v) \in f^{-1}(\{v\})$.
	Assign to each boundary vertex $v$ an arbitrarily chosen point
	$x(v) \in f^{-1}(\{v\}) \cap \partial D^2$.
	Such a point exists since $f|_{\partial D^2}$ is
	a surjective map onto the boundary.
	For each boundary edge $e$ with endpoints $v_0$ and $v_1$,
	let $A(e)$ denote the arc in the boundary	that joins the
	two points $x(v_0)$ and $x(v_1)$.
	As $f|_{\partial D^2}$ is pseudo-automorphism of circles,
	different boundary edges $e_0$ and $e_1$ are associated with arcs
	$A(e_0)$ and $A(e_1)$ that do not intersect except in their endpoints.
	Now consider an interior edge $e$ with endpoints $v_0$ and $v_1$.
	Let $U = e^{h/4}$ be the open $h/4$-thickening of $e$.
	By Lemma \ref{Lemma: set of solutions connected and simply connected}, the
	set $f^{-1}(U)$ is connected.
	In particular there exists an arc $A(e)$ in $f^{-1}(U)$
	which connects $x(v_0)$ and $x(v_1)$.
	We can choose these arcs such that arcs associated with different edges
	do not intersect except in their endpoints.
	Then for each square $S \in \mathcal{M}$ of the mesh we have four arcs,
	corresponding to the four edges of the square,
	which form a simple closed curve $C(S) \subseteq D^2$.
	This curve bounds an open region $\Omega(S) \subseteq D^2$.
	Now, choose for each edge $e$ with endpoints $v_0$ and $v_1$
	a bijective map
	$\varphi_e \colon A(e) \to e$
	which sends $x(v_0)$ to $v_0$	and $x(v_1)$ to $v_1$.
	For each square $S \in \mathcal{M}$ this yields a bijective map
	$\varphi_S \colon \clos{\Omega(S)} \to S$ by radial extension.
	Now assign to each $x \in D^2$ which is contained in some region
	$\clos{\Omega(S)}$ the value $\varphi_S(x)$.
	This defines a partial map
	\[ f_{\varepsilon} \colon \subseteq D^2 \to D^2. \]
	Our next goal is to show that this map is well-defined,
	bijective, and $\varepsilon$-close to $f$.

	\begin{Lemma}\label{Lemma: preimage preserves topology}
		\hfill
	  \begin{enumerate}
			\item
				If $S_0$ and $S_1$ are adjacent squares which intersect only in an
				edge $e$, then the corresponding regions $\clos{\Omega(S_0)}$
				and $\clos{\Omega(S_1)}$ intersect only in the arc $A(e)$.
				Furthermore, the arc $A(e)$ without its endpoints is contained
				in the interior of the
				union of  $\clos{\Omega(S_0)}$ and  $\clos{\Omega(S_1)}$.
			\item
				If $S_0$ and $S_1$ are adjacent squares which intersect only
				in a vertex $v$, then the corresponding regions $\clos{\Omega(S_0)}$
				and $\clos{\Omega(S_1)}$ intersect only in the point $x(v)$.
			\item
				If $S_0$ and $S_1$ are non-adjacent squares, then the corresponding regions
				$\clos{\Omega(S_0)}$ and $\clos{\Omega(S_1)}$ are disjoint.
		\end{enumerate}
	\end{Lemma}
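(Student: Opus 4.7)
The plan is to view the family of arcs $\{A(e)\}$ as defining a planar embedded graph
$G = \bigcup_e A(e)$ with vertex set $\{x(v)\}$, and to identify the regions $\Omega(S)$
with the bounded faces of $G$ in $D^2$. By construction arcs assigned to distinct edges
meet only at their common endpoint vertices, so $G$ is combinatorially isomorphic to the
$1$-skeleton of $\mathcal{M}$. The boundary arcs collectively cover $\partial D^2$, so
$G \supseteq \partial D^2$ and every bounded face of $G$ lies inside $D^2$. Each $C(S)$
is then a $4$-cycle in $G$ enclosing a Jordan domain $\Omega(S)$.

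The central step would be to show that the $\Omega(S)$, $S \in \mathcal{M}$, are precisely
the bounded faces of $G$ in $D^2$ and are hence pairwise disjoint. I would count faces via
Euler's formula: $G$ and the $1$-skeleton of $\mathcal{M}$ have the same vertex and edge
counts, so when embedded in $S^2$ they have the same number of faces, and removing the
unique unbounded face leaves exactly $|\mathcal{M}|$ bounded faces of $G$ in $D^2$.
Since the $|\mathcal{M}|$ Jordan domains $\Omega(S)$ are pairwise distinct, non-empty, and
each is a union of bounded faces of $G$, a counting argument forces each $\Omega(S)$ to
consist of exactly one face. The verification that each face of $G$ is in fact contained
in some $\Omega(S)$ uses the localisation $A(e) \subseteq f^{-1}(e^{h/4})$, which prevents
arcs for combinatorially far edges from interleaving; where this is not enough, the degree
axiom in Definition \ref{Definition: Pseudo-automorphism} can be applied to pairs of
disjoint opens of the form $\Omega(S)$ and $\Omega(S') \setminus \clos{\Omega(S)}$ to rule
out nested configurations.

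Granting this structural result, the three parts of the lemma follow immediately from the
combinatorial intersection pattern of the boundary cycles $C(S)$. For part 3, non-adjacent
squares share no edges or vertices in $G$, so $C(S_0) \cap C(S_1) = \emptyset$ and the
closed faces $\clos{\Omega(S_i)}$ are disjoint. For part 2, the only arc-endpoint shared
by $C(S_0)$ and $C(S_1)$ is $x(v)$, and no arcs of $C(S_0)$ and $C(S_1)$ coincide, so
$\clos{\Omega(S_0)} \cap \clos{\Omega(S_1)} = \{x(v)\}$. For part 1, the curves share
exactly the arc $A(e)$, giving $\clos{\Omega(S_0)} \cap \clos{\Omega(S_1)} = A(e)$. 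The
interior claim then reduces to the local observation that at any point $p \in A(e)
\setminus \{x(v_0), x(v_1)\}$ the two adjacent faces $\Omega(S_0)$ and $\Omega(S_1)$ lie
on the two sides of $A(e)$, so a small disk around $p$ in $D^2$ is covered by
$\clos{\Omega(S_0)} \cup \clos{\Omega(S_1)}$.

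The principal obstacle is the rigorous identification of the $\Omega(S)$ with the faces of
$G$: a combinatorially fixed planar graph may in general admit topologically inequivalent
embeddings giving different face structures, so the combinatorial isomorphism with the
mesh $1$-skeleton is not by itself enough. The cleanest way around this is to exploit the
fact that each $A(e)$ is trapped inside $f^{-1}(e^{h/4})$, which is a topologically
controlled neighbourhood, together with the pseudo-automorphism degree condition, to
exclude the pathological case in which some $\Omega(S)$ would contain arcs $A(e')$
belonging to other squares.
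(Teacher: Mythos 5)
Your reduction to a face-structure statement is a genuinely different route from the paper, but the step that carries all the weight does not go through as stated. From ``the $|\mathcal{M}|$ Jordan domains $\Omega(S)$ are pairwise distinct, non-empty, and each is a union of bounded faces of $G$'' together with the Euler count of $|\mathcal{M}|$ bounded faces you cannot conclude that each $\Omega(S)$ is a single face: nested configurations are perfectly consistent with these hypotheses. With two squares and two faces $F_1, F_2$ one could have $\Omega(S_0) = F_1 \cup F_2$ and $\Omega(S_1) = F_1$; the regions are distinct, non-empty, unions of faces, and the totals match, yet the conclusion (and with it all three parts of the lemma) fails. Distinctness only rules out equality, not containment, and containment is exactly the dangerous case here --- indeed the paper's proof is organised entirely around excluding it. You flag this yourself at the end (``exclude the pathological case in which some $\Omega(S)$ would contain arcs $A(e')$ belonging to other squares''), but the proposed remedy --- applying the degree axiom of Definition \ref{Definition: Pseudo-automorphism} to $\Omega(S)$ and $\Omega(S') \setminus \clos{\Omega(S)}$ --- is not carried out: you never say at which point $y$ the degrees are evaluated, nor why the individual degrees are forced to be $1$, and without that the axiom gives nothing. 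So the proposal identifies the crux but leaves it open; everything before it (the graph $G$, Euler's formula, Whitney-type worries about embeddings) is machinery that does not close the gap.

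For comparison, the paper avoids the global face count entirely and argues locally: since $C(S_0) \cap C(S_1)$ is exactly $A(e)$, resp.\ $\{x(v)\}$, resp.\ empty, the closed regions can only meet in additional points if one contains the other, and containment is excluded by exhibiting a separating set built from $f$ itself: the preimage $f^{-1}(c_0)$ of the centre $c_0$ of $S_0$ is contained in $\clos{\Omega(S_0)}$, while $\clos{\Omega(S_1)} \subseteq f^{-1}\bigl(S_1^{h/4}\bigr)$ misses it because $c_0 \notin S_1^{h/4}$; the same containment in preimages of thickenings gives part 3 immediately, with no need to know the face structure of $G$. If you want to salvage your route, the missing ingredient is precisely a statement of this kind (that $\clos{\Omega(S)}$ captures $f^{-1}$ of the centre of $S$ and nothing mapped far from $S$), proved via the connectedness of preimages (Lemma \ref{Lemma: set of solutions connected and simply connected}) and a degree computation at the centre point; once you have that, the nesting case dies and your counting becomes unnecessary.
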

	\begin{proof}
		\hfill
		\begin{enumerate}
			\item
				The curves $C(S_0)$ and $C(S_1)$ intersect only in $A(e)$ by
				construction.
				Hence, if $\clos{\Omega(S_0)}$ and $\clos{\Omega(S_1)}$ intersect in
				further points, one set must contain the other.
				Let $c_0$ denote the centre of $S_0$.
				Then
				$\clos{\Omega(S_0)}$ contains $f^{-1}(c_0)$
				but
				$\clos{\Omega(S_1)}$ is disjoint from $f^{-1}(c_0)$.
				Hence, $\clos{\Omega(S_1)}$ cannot contain $\clos{\Omega(S_0)}$.
				By symmetry, $\clos{\Omega(S_0)}$ cannot contain $\clos{\Omega(S_1)}$.
				It follows that $\clos{\Omega(S_0)}$ and $\clos{\Omega(S_1)}$ intersect
				each other only in $A(e)$.
				The arc $A(e)$ is then contained in the bounded region of the curve
				whose trace is the union of the remaining arcs of $C(S_0)$ and $C(S_1)$.
				It follows that $A(e)$ without its endpoints
				is contained in the interior of the
				union of  $\clos{\Omega(S_0)}$ and  $\clos{\Omega(S_1)}$.
			\item
				Follows from an analogous argument.
			\item
				By construction,
				$\clos{\Omega(S_0)}$ and $\clos{\Omega(S_1)}$ are
				contained in the preimages of the $h/4$-thickenings of $S_0$ and $S_1$
				respectively.
				If $S_0$ and $S_1$ are non-adjacent then these sets are disjoint,
				and then so are $\clos{\Omega(S_0)}$ and $\clos{\Omega(S_1)}$.
		\end{enumerate}
	\end{proof}

	\begin{Lemma}\label{Lemma: approximating pseudo-auto}
		The map $f_{\varepsilon}$ is well-defined, total, bijective, and
		$\varepsilon$-close to $f$.
	\end{Lemma}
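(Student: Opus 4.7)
The plan is to verify the four claims in turn: well-definedness follows fairly directly from the previous lemma, bijectivity is a local-to-global argument, $\varepsilon$-closeness is a geometric estimate, and totality is the only claim that requires genuine planar-topological work---this is where I expect the real difficulty to lie.

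To check well-definedness, I would use Lemma~\ref{Lemma: preimage preserves topology}: the closed regions $\overline{\Omega(S_0)}$ and $\overline{\Omega(S_1)}$ for adjacent squares meet only in the shared arc $A(e)$ or the shared vertex $x(v)$, and for non-adjacent squares they are disjoint. On a shared arc, both radial extensions $\varphi_{S_0}$ and $\varphi_{S_1}$ restrict to the prescribed bijection $\varphi_e$; on a shared vertex $x(v)$, both assign the value $v$. So the value of $f_\varepsilon$ on the overlap is unambiguous.

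The hard part is totality: showing that $\bigcup_S \overline{\Omega(S)} = D^2$. My approach would be to observe that the arcs $A(e)$ together form an embedded planar graph $\Gamma$ in $D^2$. The boundary arcs tile $\partial D^2$ exactly, because $f|_{\partial D^2}$ is a pseudo-automorphism of $S^1$ and the boundary vertex preimages $x(v)$ therefore appear along $\partial D^2$ in the correct cyclic order. Together with the fact that each $C(S)$ is a simple closed curve (from Lemma~\ref{Lemma: preimage preserves topology}), this yields a cellular subdivision of $D^2$ combinatorially isomorphic to the mesh $\mathcal{M}$. An Euler-characteristic count, or equivalently a direct planar argument applying the Jordan curve theorem to each $C(S)$ and using part~3 of Lemma~\ref{Lemma: preimage preserves topology} to rule out nested configurations, then shows that the $\overline{\Omega(S)}$ exhaust $D^2$ and leave no spurious face in between.

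Given well-definedness and totality, bijectivity reduces to a cell-by-cell check, since each $\varphi_S$ is a bijection from $\overline{\Omega(S)}$ onto $S$ and different cells overlap only along their boundaries, where the bijections agree via the $\varphi_e$. For $\varepsilon$-closeness, each interior arc $A(e)$ was chosen inside $f^{-1}(e^{h/4})$, so $C(S) \subseteq f^{-1}(\overline{(\partial S)^{h/4}})$. Using Lemma~\ref{Lemma: set of solutions connected and simply connected}, I would then argue that in fact $\overline{\Omega(S)} \subseteq f^{-1}(\overline{S^{h/4}})$: any $y$ with $d(y,S) > h/4$ lies in the interior of some $S' \neq S$, and its connected preimage sits inside $\overline{\Omega(S')}$, hence avoids the interior of $\Omega(S)$ by Lemma~\ref{Lemma: preimage preserves topology}. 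Thus for $x \in \overline{\Omega(S)}$ we have $f(x) \in \overline{S^{h/4}}$ while $f_\varepsilon(x) \in S$, giving $d(f(x), f_\varepsilon(x)) \leq h + h/4 < 2h < \varepsilon$, which completes the plan.
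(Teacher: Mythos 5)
Your proposal is correct and handles well-definedness, bijectivity, and $\varepsilon$-closeness essentially as the paper does, but for totality you take a genuinely different route. The paper's argument is a connectedness argument: the domain of $f_{\varepsilon}$ is a finite union of closed sets, hence closed, and parts 1 and 2 of Lemma \ref{Lemma: preimage preserves topology} show that every point of the domain (interior points of arcs, and arc endpoints where four regions meet) lies in the interior of the domain; being clopen in the connected space $D^2$, the domain is all of $D^2$. You instead regard the arcs $A(e)$ as an embedded planar graph isomorphic to the mesh skeleton, observe that the boundary arcs tile $\partial D^2$, and count faces (Euler's formula, or Jordan-curve bookkeeping with part 3 of Lemma \ref{Lemma: preimage preserves topology} ruling out stray arcs inside any $\Omega(S)$) to conclude that the regions $\overline{\Omega(S)}$ exhaust $D^2$. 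Both routes work: the paper's clopen argument is shorter and purely local, needing only the incidence structure along arcs and at vertices, while your face count is global, makes the cell structure explicit (which you then reuse for bijectivity), and in exchange requires checking that the arc graph is connected and that each $\Omega(S)$ is a single face. One small repair is needed in your closeness step: a point $y$ with $d(y,S) > h/4$ need not lie in the interior of any square $S' \neq S$ (it may sit on the mesh skeleton), and even when it does, the inclusion $f^{-1}(y) \subseteq \overline{\Omega(S')}$ can fail if $y$ is within $h/4$ of $\partial S'$. All you actually need is that $f^{-1}(y)$ avoids $\Omega(S)$, which can be recovered from connectedness of preimages (Lemma \ref{Lemma: set of solutions connected and simply connected}) together with the fact that $f(C(S))$ lies in the closed $h/4$-thickening of $S$; alternatively, invoke directly the inclusion of each $\overline{\Omega(S')}$ in the preimage of the $h/4$-thickening of $S'$, which the paper asserts by construction in the proof of Lemma \ref{Lemma: preimage preserves topology} and which immediately yields the paper's one-line estimate that $f(x)$ and $f_{\varepsilon}(x)$ lie in adjacent squares, so that $d\left(f(x), f_{\varepsilon}(x)\right) \leq 2h < \varepsilon$.
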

	\begin{proof}
		By Lemma \ref{Lemma: preimage preserves topology},
		the map is well-defined, as every point in its domain is assigned
		at most one value.
		The domain of $f_{\varepsilon}$ is a finite union of closed sets, and hence
		closed.
		If $x$ is a point in the domain, it is either contained in an open region
		$\Omega(S)$ or on an arc $A(e)$.
		In the latter case it is either contained in the boundary $\partial D^2$
		or it is contained in a common arc of two regions $\clos{\Omega(S_0)}$
		and $\clos{\Omega(S_1)}$.
		If it is contained in an arc, it is either an endpoint of the arc or an
		``interior point'', \ie not an endpoint.
		If it is an ``interior point'', by Lemma \ref{Lemma: preimage preserves topology}.1
		it is contained in the interior of the union of the two regions, and hence
		in particular in the interior of the domain of $f_{\varepsilon}$.
		If it is an endpoint, then by a similar argument as in
		Lemma \ref{Lemma: preimage preserves topology}.1
		it is contained in the interior of the union of the four regions
		which meet at this point.
		Hence, every point of the domain of $f_{\varepsilon}$ is an interior point.
		It follows that the domain of $f_{\varepsilon}$ is open and closed.
		Hence $f_{\varepsilon}$ is total.
		By construction $f_{\varepsilon}$ is bijective.
		Finally, if $x \in D^2$ is mapped by $f$ to a square $S$, then
		$f_{\varepsilon}(x)$ is contained in an adjacent square.
		Hence, $f_{\varepsilon}$ is $\varepsilon$-close to $f$.
	\end{proof}

\section{Computability of the Fr\'echet distance}

	Putting it all together we can prove Theorem \ref{Theorem: main theorem}.
	Let us restate the result more formally:

	\begin{Theorem}
		Let $X$ be a computable metric space.
		Then the function
		\begin{align*}
		d_{\text{Fr\'echet}} \colon &C(D^2, X) \times C(D^2, X) \to \R, \\
		&(A,B) \mapsto
			\inf_{\varphi, \psi \in \Aut'(D^2)}
				\max_{x \in D^2} d\left( A\left(\varphi(x)\right), B\left(\psi(x)\right) \right)
		\end{align*}
		is computable.
	\end{Theorem}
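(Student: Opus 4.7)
The plan is to obtain $d_{\text{Fr\'echet}}$ as both upper and lower semicomputable uniformly in $(A, B)$. The underlying observation is that
\[G(\varphi,\psi) := \max_{x \in D^2} d\left(A(\varphi(x)), B(\psi(x))\right)\]
is uniformly computable in $(A, B, \varphi, \psi)$, being the composition of continuous evaluation, the metric of $X$, and maximum over the computably compact set $D^2$. Since $G$ is moreover continuous on $C(D^2,D^2)^2$, one has $\inf_{\Aut'(D^2)^2} G = \inf_{\clos{\Aut'(D^2)}^2} G = d_{\text{Fr\'echet}}$. Theorem \ref{Theorem: automorphisms located} yields $\clos{\Aut'(D^2)}^2$ as a (lower semi)computable closed subset of $C(D^2, D^2)^2$, so Proposition \ref{Proposition: Computability of minimum}(1) immediately gives upper semicomputability of $d_{\text{Fr\'echet}}$.

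For lower semicomputability, the plan is to replace the infimum by a minimum over a uniformly computable family of upper semicomputable compact sets that still captures $d_{\text{Fr\'echet}}$ to within $2^{-n}$. From moduli $\mu_A, \mu_B$ (computable from $A, B$), choose $M(n)$ so that any graph-distance approximation at scale $2^{-M(n)}$ perturbs $G$ by at most $2^{-n}$; this is the natural two-variable version of Lemma \ref{Lemma: Graph distance}, obtained by inserting the intermediate term $d(A(\varphi(x)), B(\tilde{\psi}(x)))$ into the triangle inequality and applying the one-variable estimate to each half. Take $L_n$ to be the Lipschitz constant supplied by Lemma \ref{Lemma: Lipschitz approximation} at parameter $M(n)$, and set
\[K_n := \clos{\Aut'(D^2)} \cap \operatorname{lip}(L_n).\]
By Theorems \ref{Theorem: automorphisms located} and \ref{Theorem: computable Arzela-Ascoli}, and because the intersection of an upper semicomputable closed set with an upper semicomputable saturated compact is upper semicomputable compact, $K_n \in \K(C(D^2,D^2))$ is computable uniformly in $n$. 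Proposition \ref{Proposition: Computability of minimum}(2) then provides $m_n := \min_{K_n^2} G$ as a uniformly lower semicomputable real.

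The sandwich is now immediate: $K_n \subseteq \clos{\Aut'(D^2)}$ gives $m_n \geq d_{\text{Fr\'echet}}$, while Lemma \ref{Lemma: Lipschitz approximation} associates to any $(\varphi, \psi) \in \Aut'(D^2)^2$ a pair $(\tilde{\varphi}, \tilde{\psi}) \in K_n^2$ with $d_\Gamma(\varphi, \tilde{\varphi}), d_\Gamma(\psi, \tilde{\psi}) \leq 2^{-M(n)}$, whence $G(\tilde{\varphi}, \tilde{\psi}) \leq G(\varphi, \psi) + 2^{-n}$, and taking the infimum over $(\varphi,\psi)$ one gets $m_n \leq d_{\text{Fr\'echet}} + 2^{-n}$. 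Hence $d_{\text{Fr\'echet}} = \sup_n (m_n - 2^{-n})$ is lower semicomputable, and combined with the upper bound above we conclude that $d_{\text{Fr\'echet}}$ is computable. The substantive work lies in the earlier sections — notably Lemma \ref{Lemma: Lipschitz approximation} and the computability of $\clos{\Aut'(D^2)}$ in Theorem \ref{Theorem: automorphisms located} — so at this stage the only minor obstacle is tracking constants and verifying the two-variable variant of Lemma \ref{Lemma: Graph distance}.
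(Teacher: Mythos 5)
Your proposal is correct and follows essentially the same route as the paper: upper semicomputability via Theorem \ref{Theorem: automorphisms located} and Proposition \ref{Proposition: Computability of minimum}.1, and lower semicomputability by intersecting $\clos{\Aut'(D^2)}$ with the Lipschitz ball from Theorem \ref{Theorem: computable Arzela-Ascoli} at the constant given by Lemma \ref{Lemma: Lipschitz approximation}, then sandwiching the resulting minima within $2^{-n}$ of the Fr\'echet distance via (the two-variable form of) Lemma \ref{Lemma: Graph distance}. The paper does exactly this, choosing the scale $\alpha(n)=\mu_A(n+1)+\mu_B(n+1)$, so your explicit handling of the two-reparametrisation version of Lemma \ref{Lemma: Graph distance} is just making precise what the paper uses implicitly.
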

	\begin{proof}
		It follows from Theorem \ref{Theorem: automorphisms located} and
		Proposition \ref{Proposition: Computability of minimum}.1
		that the Fr\'echet distance is uniformly upper semicomputable in $A$ and $B$.
		It remains to show that the Fr\'echet distance is uniformly lower semicomputable in $A$ and $B$.
		For a number $L \in \R$, let
		\[	\clos{\Aut'_L(D^2)} =
		\clos{\Aut'(D^2)} \cap \Set{f \colon D^2 \to D^2}{\forall x,y \in D^2.\left( \left| f(x) - f(y) \right| \leq L \cdot \left| x - y \right| \right)}\]
		denote the closure of the set of $L$-Lipschitz orientation-preserving automorphisms of $D^2$.
		Let $n \in \N$.
		By Lemma \ref{Lemma: Lipschitz approximation} and Lemma \ref{Lemma: Graph distance}, the number
		\[d_n = \inf_{\varphi, \psi \in \clos{\Aut'_{L_n}(D^2)}} \max_{x \in D^2} d\left( A(\varphi(x)), B(\psi(x)) \right),\]
		where
		\[L_n = 4^{\alpha(n)} \times 4^{4^{\alpha(n)}} \times \left(3 \times 4^{\alpha(n)} + 3\right) + 1\]
		and
		\[\alpha(n) = \mu_A(n + 1) + \mu_B(n + 1)\]
		is $2^{-n}$-close to the Fr\'echet distance $d_{\text{Fr\'echet}}(A,B)$.
		By Theorem \ref{Theorem: automorphisms located} and Theorem \ref{Theorem: computable Arzela-Ascoli},
		the set $\clos{\Aut'_{L_n}(D^2)}$ is uniformly computably compact in $n$.
		Hence, by Proposition \ref{Proposition: Computability of minimum}.2
		the numbers $d_n - 2^{-n}$ are uniformly lower semicomputable in $A$, $B$, and $n$.
		As these numbers converge from below to the Fr\'echet distance, the Fr\'echet distance itself
		is uniformly lower semicomputable in $A$ and $B$.
	\end{proof}

It is easy to modify the proof to show computability of the Fr\'echet distance of surfaces which are parametrised over the
sphere $S^2$ rather than the square $[0,1]^2$.
The proof of Theorem \ref{Theorem: automorphisms located} can be used to show that the set of orientation-reversing automorphisms and the set of all automorphisms are computable as well.
This allows us to compute further variations of the Fr\'echet distance.

\section*{Acknowledgements}

	This work was supported by EU Horizon 2020 MSCA RISE project 731143.
	The majority of this work was undertaken while the author was visiting
	\textsc{KAIST}, Daejeon, Republic of Korea.
	The author would like to thank Martin Ziegler for bringing this problem to his attention.

\bibliographystyle{abbrv}
\bibliography{frechet}
\nocite{*}
\end{document}